\documentclass{article}
\usepackage{fullpage}
\usepackage[T1]{fontenc}
\usepackage[utf8]{inputenc}
\usepackage[all=normal,bibliography=tight]{savetrees}

  \usepackage{amstext,amsfonts,amsthm,amsmath,amssymb}

\usepackage{graphicx,tikz}
\usetikzlibrary{snakes,shapes}
\usetikzlibrary{calc}
\usetikzlibrary{decorations.markings}
\usetikzlibrary{patterns}
\usepackage{comment}
\usepackage{url}
\usepackage{xspace}
\usepackage{todonotes}
\usepackage[shortlabels]{enumitem}
\usepackage[absolute]{textpos}
\usepackage{cleveref}
\usepackage{wrapfig}

\newcommand{\tw}{\mathrm{tw}}
\newcommand{\td}{\mathrm{td}}
\newcommand{\treewidth}{\mathrm{tw}}
\newcommand{\treedepth}{\mathrm{td}}

\newcommand{\Oh}{\mathcal{O}}
\newcommand{\col}{\alpha}
\newenvironment{claimproof}{\begin{proof}}{\cqed\end{proof}}

\def\cqedsymbol{\ifmmode$\lrcorner$\else{\unskip\nobreak\hfil
\penalty50\hskip1em\null\nobreak\hfil$\lrcorner$
\parfillskip=0pt\finalhyphendemerits=0\endgraf}\fi} 

\newcommand{\cqed}{\renewcommand{\qed}{\cqedsymbol}}

\newtheorem{lemma}{Lemma}[section]
\newtheorem{proposition}[lemma]{Proposition}

\newtheorem{corollary}[lemma]{Corollary}
\newtheorem{theorem}[lemma]{Theorem}
\newtheorem{claim}[lemma]{Claim}
\theoremstyle{definition}
\newtheorem{definition}[lemma]{Definition}

\crefname{lemma}{Lemma}{Lemmas}
\crefname{theorem}{Theorem}{Theorems}

\title{Improved bounds for the excluded-minor approximation of treedepth\thanks{This research is a part of a project that have received funding from the European Research Council (ERC)
under the European Union's Horizon 2020 research and innovation programme
Grant Agreement 714704. An extented abstract of this work appeared at ESA 2019~\cite{esa-version}.}}

\author{ 
 Wojciech Czerwi\'{n}ski\footnote{University of Warsaw, Poland, \texttt{W.Czerwinski@mimuw.edu.pl}}
   \and Wojciech Nadara\footnote{University of Warsaw, Poland, \texttt{W.Nadara@mimuw.edu.pl}}
     \and Marcin Pilipczuk\footnote{University of Warsaw, Poland, \texttt{M.Pilipczuk@mimuw.edu.pl}}
}

\date{}

\begin{document}

\maketitle

\begin{textblock}{20}(0, 12.4)
\includegraphics[width=40px]{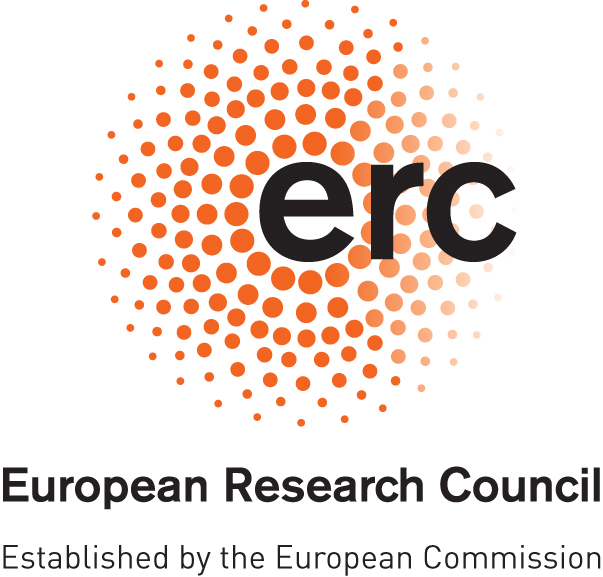}%
\end{textblock}
\begin{textblock}{20}(0, 13.4)
\includegraphics[width=40px]{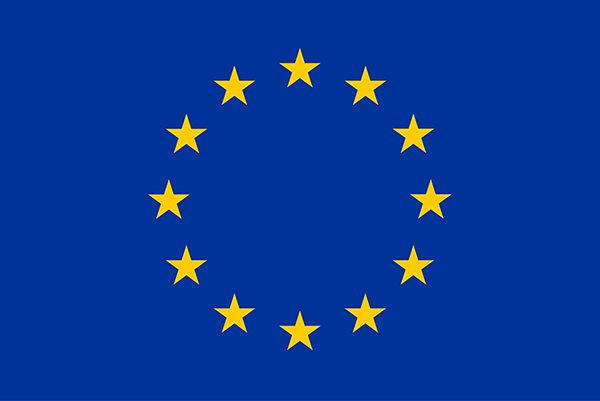}%
\end{textblock}

\begin{abstract}
Treedepth, a more restrictive graph width parameter than treewidth and pathwidth,
plays a major role in the theory of sparse graph classes. 
We show that there exists a constant $C$ such that for every positive integers $a,b$ and a graph $G$,
if the treedepth of $G$ is at least $Cab$, then the treewidth of $G$ is at least $a$
or $G$ contains a subcubic (i.e., of maximum degree at most $3$) tree of treedepth at least $b$ as a subgraph. 

As a direct corollary, we obtain that every graph of treedepth $\Omega(k^3)$ is either of treewidth at least $k$,
contains a subdivision of full binary tree of depth $k$, or contains a path of length $2^k$.
This improves the bound of $\Omega(k^5 \log^2 k)$ of Kawarabayashi and Rossman [SODA 2018].

We also show an application of our techniques for approximation algorithms of treedepth: given a graph $G$ of treedepth $k$
and treewidth $t$, one can in polynomial time compute a treedepth decomposition of $G$ of width
$\Oh(kt \log^{3/2} t)$. This improves upon a bound of $\Oh(kt^2 \log t)$ stemming from a tradeoff between known results.

The main technical ingredient in our result is a proof that every tree of treedepth $d$
contains a subcubic subtree of treedepth at least $d \cdot \log_3 ((1+\sqrt{5})/2)$.  
\end{abstract}

\section{Introduction}
For an undirected graph $G$, the \emph{treedepth} of $G$ is the minimum height of a rooted forest
whose ancestor-descendant closure contains $G$ as a subgraph. 
Together with more widely known related width notions such as treewidth and pathwidth, 
it plays a major role in structural graph theory, in particular in the study of general sparse graph classes~\cite{NesetrilM15,sparsity-book,NesetrilM06}.

An important property of treedepth is that it admits a number of equivalent definitions. 
Following the definition of treedepth above, a \emph{treedepth decomposition} of a graph $G$ consists 
of a rooted forest $F$ and an injective mapping $f : V(G) \to V(F)$ such that for every $uv \in E(G)$ the vertices
$f(u)$ and $f(v)$ are in ancestor-descendant relation in $F$. The \emph{width} of a treedepth decomposition $(F,f)$ is the height
of $F$ (the number of vertices on the longest leaf-to-root path in $F$) and the treedepth of $G$ is the minimum possible height of a treedepth decomposition of $G$.
A \emph{centered coloring} of a graph $G$ is an assignment $\col : V(G) \to \mathbb{Z}$ such that
for every connected subgraph $H$ of $G$, $\col$ has a \emph{center} in $H$: a vertex $v \in V(H)$ of unique color, 
i.e., $\col(v) \neq \col(w)$ for every $w \in V(H) \setminus \{v\}$. 
A \emph{vertex ranking} of a graph $G$ is an assignment $\col : V(G) \to \mathbb{Z}$ such that in
every connected subgraph $H$ of $G$ there is a unique vertex of \emph{maximum} rank (value $\col(v)$).
Clearly, each vertex ranking is a centered coloring.
It turns out that the minimum number of colors (minimum size of the image of $\col$) needed for a centered coloring and for a vertex ranking are equal and equal to the treedepth of a graph~\cite{NesetrilM06}.

While there are multiple examples of algorithmic usage of treedepth in the theory of sparse graphs~\cite{sparsity-book,NesetrilM15},
our understanding of the complexity of computing minimum width treedepth decompositions is limited. 
For a graph $G$, let $\td(G)$ and $\tw(G)$ denote the treedepth and the treewidth of $G$, respectively.
An algorithm of Reidl, Rossmanith, Villaamil, and Sikdar~\cite{ReidlRVS14} computes exactly the treedepth
of an input graph $G$ in time $2^{\Oh(\td(G) \cdot t)} n^{\Oh(1)}$, given a tree decomposition of $G$
of width $t$. Combined with the classic
constant-factor approximation algorithm for treewidth that runs in $2^{\Oh(\tw(G))} n^{\Oh(1)}$
time~\cite{RobertsonS86}, one obtains an exact algorithm for treedepth running in time
$2^{\Oh(\td(G) \tw(G))} n^{\Oh(1)}$. 
No faster exact algorithm is known.

For approximation algorithms, the following folklore lemma (presented with full proof in~\cite{KR18}) is very useful.
\begin{lemma}\label{lem:tw2td}
Given a graph $G$ and a tree decomposition $(T,\beta)$ of $G$ of maximum bag size $w$, one
can in polynomial time compute a treedepth decomposition of $G$ of width at most $w \cdot \treedepth(T)$.
\end{lemma}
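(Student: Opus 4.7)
The plan is to first compute a treedepth decomposition of the tree $T$ itself, which can be done in polynomial time by a standard dynamic program on trees, yielding a rooted forest $F_T$ of height $d := \td(T)$ whose ancestor-descendant closure contains $T$. I would then build a treedepth decomposition $(F,f)$ of $G$ by ``blowing up'' each node $t$ of $F_T$ into a chain holding a carefully chosen subset of $\beta(t)$, so that the final height is at most $w \cdot d$.

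The key combinatorial fact I would use is that in any treedepth decomposition of a graph, the vertex set of every connected subgraph has a unique element that is an ancestor of all others (this follows by induction along an arbitrary spanning path of the subgraph, together with the edge-comparability property). Since for every $v \in V(G)$ the set $T_v := \{t \in V(T) : v \in \beta(t)\}$ is a connected subtree of $T$, this fact supplies an ``anchor'' $\tau(v) \in V(F_T)$, namely the common $F_T$-ancestor of $T_v$. I then form $F$ by replacing each $t \in V(F_T)$ with an arbitrary path $C_t$ on the vertex set $\tau^{-1}(t) \subseteq \beta(t)$ and gluing the top of $C_t$ below the bottom of the chain at the $F_T$-parent of $t$. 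Because each $|C_t| \leq w$ and every root-to-leaf path in $F_T$ has length at most $d$, the resulting forest $F$ has height at most $wd$.

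It then remains to verify that for every edge $uv \in E(G)$ the vertices $f(u)$ and $f(v)$ are ancestor-comparable in $F$. Here I would apply the key fact a second time: any bag $\beta(t)$ containing $\{u,v\}$ witnesses $T_u \cap T_v \neq \emptyset$, so $T_u \cup T_v$ is a connected subtree of $T$; its $F_T$-ancestor must therefore equal either $\tau(u)$ or $\tau(v)$, making $\tau(u)$ and $\tau(v)$ comparable in $F_T$. This places $u$ and $v$ either in the same chain or in two chains in ancestor-descendant position, as required. No step of the argument really poses an obstacle; the only genuine idea is the ``anchor'' $\tau(v)$, after which everything is bookkeeping.
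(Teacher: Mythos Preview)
The paper does not actually prove this lemma: it is stated as folklore with a pointer to~\cite{KR18} for a full proof, so there is no in-paper argument to compare against. Your proposal is correct and is essentially the standard folklore proof: anchor each $v \in V(G)$ at the $F_T$-highest node of its bag-subtree $T_v$, replace each $t$ by a chain on $\tau^{-1}(t) \subseteq \beta(t)$, and stack the chains along $F_T$. The two invocations of the ``connected subgraph has a unique topmost element'' fact are exactly what is needed, first to define $\tau$ and second to show $\tau(u),\tau(v)$ are $F_T$-comparable whenever $uv\in E(G)$.

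One small technicality worth making explicit in a write-up: some nodes $t\in V(F_T)$ may have $\tau^{-1}(t)=\emptyset$. This is harmless---simply omit such $t$ and attach the chains of its $F_T$-children directly to the bottom of the chain at its nearest nonempty $F_T$-ancestor---but it should be said, since otherwise ``the top of $C_t$'' is undefined. With that caveat, the argument is complete and matches what the cited reference does.
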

Using Lemma~\ref{lem:tw2td}, one can obtain an approximation algorithm for treedepth 
with a cheap tradeoff trick.%
\footnote{This trick has been observed and communicated to us by Micha\l{} Pilipczuk.
  We thank Micha\l{} for allowing us to include it in this paper.}
\begin{lemma}\label{lem:td-cheap-apx}
Given a graph $G$, one can in polynomial time compute a treedepth decomposition of $G$
of width $\Oh(\treedepth(G) \cdot \treewidth(G)^2 \log \treewidth(G))$. 
\end{lemma}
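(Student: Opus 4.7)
The plan is a ``cheap'' trade-off between two known algorithmic tools: the exact treedepth algorithm of Reidl--Rossmanith--Villaamil--Sikdar~\cite{ReidlRVS14}, which runs in time $2^{\Oh(\treedepth(G) \cdot t)} n^{\Oh(1)}$ when given a tree decomposition of $G$ of width $t$, and the constructive proof of the classical bound $\treedepth(G) \le (\treewidth(G)+1) \lceil \log_2(n+1) \rceil$, which recursively applies balanced separators inside a given tree decomposition to produce a treedepth decomposition of width $\Oh(\treewidth(G) \log n)$. Combined with a polynomial-time approximation algorithm for treewidth, these two facts plug into each other to yield the claimed bound.

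First I would compute, using a polynomial-time $\alpha$-approximation algorithm for treewidth with $\alpha = \Oh(\sqrt{\log \treewidth(G)})$ (e.g., Feige--Hajiaghayi--Lee), a tree decomposition $(T_0, \beta_0)$ of $G$ of width $t_0 = \Oh(\treewidth(G) \cdot \alpha)$. Then I would launch the exact algorithm of~\cite{ReidlRVS14} on $G$ with $(T_0, \beta_0)$, capped at a polynomial time budget of $n^c$ for a sufficiently large constant $c$.

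If the exact algorithm terminates within the budget, I output its result: an optimal treedepth decomposition, whose width $\treedepth(G)$ is trivially at most $\Oh(\treedepth(G) \cdot \treewidth(G)^2 \log \treewidth(G))$. Otherwise, the timeout forces $2^{\Omega(\treedepth(G) \cdot t_0)} > n^c$, so $\log n = \Oh(\treedepth(G) \cdot t_0) = \Oh(\treedepth(G) \cdot \treewidth(G) \cdot \alpha)$. In this case I fall back on the balanced-separator construction: recursively apply balanced separators of size $\Oh(t_0)$ supplied by $(T_0, \beta_0)$, which yields a treedepth decomposition of $G$ of width
\[
\Oh(t_0 \log n) = \Oh(\treewidth(G) \alpha \cdot \treedepth(G) \treewidth(G) \alpha) = \Oh(\treedepth(G) \treewidth(G)^2 \alpha^2) = \Oh(\treedepth(G) \treewidth(G)^2 \log \treewidth(G)),
\]
using $\alpha^2 = \Oh(\log \treewidth(G))$.

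The main obstacle (and indeed the whole point of the trick) is calibrating the trade-off parameters correctly: the budget exponent $c$ must be large enough that a timeout actually certifies $\log n = \Oh(\treedepth(G) \cdot t_0)$, and the approximation ratio $\alpha = \Oh(\sqrt{\log \treewidth(G)})$ is precisely what makes the second-case analysis come out to $\treewidth(G)^2 \log \treewidth(G)$. Each ingredient is a black box; the content lies entirely in combining them.
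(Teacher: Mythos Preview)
Your proposal is correct and follows essentially the same approach as the paper: both use the Feige--Hajiaghayi--Lee treewidth approximation, invoke the exact algorithm of~\cite{ReidlRVS14} to handle the regime where $\treedepth(G)\cdot t$ is small relative to $\log n$, and otherwise fall back on the $\Oh(t\log n)$ treedepth decomposition obtained from balanced separators (Lemma~\ref{lem:tw2td}). The only cosmetic difference is that the paper iterates the exact algorithm over target values $k=1,\ldots,(\log n)/t$ (each call then runs in $2^{\Oh(kt)}n^{\Oh(1)}=n^{\Oh(1)}$ time), whereas you run it once under a polynomial time budget and infer from a timeout that $\log n=\Oh(\treedepth(G)\cdot t_0)$; both variants are valid and yield the same bound.
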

\begin{proof}
Let $n = |V(G)|$.
Using the polynomial-time approximation algorithm for treewidth~\cite{FeigeHL08}, compute
a tree decomposition $(T,\beta)$ of $G$ of width $t = \Oh(\treewidth(G) \sqrt{\log \treewidth(G)})$
and $\Oh(n)$ bags.
For every integer $1 \leq k \leq (\log n) / t$, use the algorithm of~\cite{ReidlRVS14}
to check in polynomial time if the treedepth of $G$ is at most $k$.
Note that if this is the case, the algorithm finds an optimal treedepth decomposition and we can conclude.
Otherwise, we have $\log n \leq \td(G) \cdot t$ and we 
apply Lemma~\ref{lem:tw2td} to $G$ and $(T,\beta)$ obtaining a treedepth decomposition 
of $G$ of width
\[ \Oh(t \log n) \leq \Oh(\treedepth(G) \cdot t^2) \leq \Oh( \treedepth(G) \cdot \treewidth(G)^2 \log \treewidth(G)). \]
\end{proof}
Lemma~\ref{lem:td-cheap-apx} is the only polynomial approximation algorithm
for treedepth running in polynomial time we are aware of.

A related topic to exact and approximation algorithms computing minimum-width treedepth decomposition
is the study of obstructions to small treedepth. Dvo\v{r}\'{a}k, Giannopoulou, and Thilikos~\cite{DvorakGT12}
proved that every minimal graph of treedepth $k$ has the number of vertices at most double-exponential in $k$.
More recently, Kawarabayashi and Rossman showed an excluded-minor theorem for treedepth.
\begin{theorem}[\cite{KR18}]
There exists a universal constant $C$ such that for every integer $k$ every graph of treedepth 
at least $Ck^5 \log^2k$ is either of treewidth at least $k$, contains a subdivision of a full binary tree of depth $k$
as a subgraph, or contains a path of length $2^k$.
\end{theorem}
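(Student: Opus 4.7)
My plan splits into two parts: reducing the problem to an inner problem on subcubic trees, and then extracting the desired path or binary tree inside such a tree.

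Assume $\tw(G) < k$, since otherwise the first conclusion of the theorem holds. The goal becomes to convert the large treedepth of $G$ into a subcubic tree $T \subseteq G$ of large treedepth. Starting from an (approximately optimal) tree decomposition $(T',\beta)$ of $G$, Lemma~\ref{lem:tw2td} gives $\td(G) \leq \tw(G) \cdot \td(T')$, hence $\td(T') \geq \td(G)/k$. The tree $T'$ is only a combinatorial object, so its depth must be transferred into a genuine subgraph of $G$: along any long root-to-leaf branch of $T'$ on which treedepth does not drop quickly, consecutive bags share many vertices, and a Menger-type argument lets one stitch those shared vertices into honest vertex-disjoint paths in $G$. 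Performing this carefully at every branching vertex of $T'$ should yield a subcubic tree $T \subseteq G$ with $\td(T) \geq \td(T') / \mathrm{poly}(k, \log k)$.

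For the second step I would analyze the subcubic tree $T$ directly. Root $T$ arbitrarily and, at each vertex, follow the child whose subtree has maximum treedepth; the treedepth drops by at most one per step along this heavy path. At each heavy-path vertex whose \emph{two} children both carry large treedepth, the non-heavy child contributes one branch of a subdivided full binary tree. If at least $k$ such branching events occur, we assemble a subdivision of a full binary tree of depth $k$; otherwise the heavy path is long and nearly straight, yielding a path of length $2^k$. Balancing the two regimes, a subcubic tree of treedepth $\mathrm{poly}(k)$ suffices, and the resulting path or binary tree sits inside $G$ as desired.

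The main technical obstacle is the first step: extracting a subcubic subtree of $G$ while losing as few factors of $k$ and $\log k$ as possible. The $\mathrm{poly}(k, \log k)$ loss absorbed by the Menger-type threading is precisely what produces the factor $k^5 \log^2 k$ in the stated bound. Sharpening this reduction, through a better structural understanding of subcubic subtrees inside trees, is exactly what the present paper is designed to do, and it is what drives the overall exponent down to $\Omega(k^3)$.
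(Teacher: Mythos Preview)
This theorem is quoted from~\cite{KR18} and is not proved in the present paper, so there is no in-paper proof to compare against directly. Your sketch is, in spirit, a faithful outline of the original Kawarabayashi--Rossman argument: pass to a tree decomposition whose shape tree $T'$ has large treedepth, thread $T'$ back into $G$ as a bounded-degree subgraph, and then extract a long path or a binary-tree subdivision from the resulting subcubic tree. One genuine gap in your first step is that the threading requires a \emph{greedy} tree decomposition (one with $V(T')=V(G)$ and every edge of $G$ along an ancestor--descendant pair); for an arbitrary tree decomposition the assertion that ``consecutive bags share many vertices'' does not by itself produce a Menger-type embedding, and this is exactly where~\cite{KR18} does nontrivial work (encapsulated here as Lemma~\ref{lem:auxiliary}).

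It is worth contrasting your outline with how the present paper obtains its stronger $Ck^3$ bound, since your final paragraph mischaracterises the improvement. The paper does \emph{not} sharpen the threading step; it replaces it entirely. Theorem~\ref{thm:simple} observes that the subtree-subgraphs of $G$ of treedepth exactly $b$ pairwise intersect and hence form a bramble, so a single bag of any width-$(a{-}1)$ tree decomposition hits them all; induction then gives $\td(G)\le ab$. This produces an honest tree subgraph of $G$ of treedepth at least $\td(G)/(\tw(G)+1)$ directly, with no Menger argument and no $\log k$ loss. Lemma~\ref{lem:trees} then passes to a subcubic subtree at constant multiplicative cost. So your closing sentence is only half right: the subcubic-subtree lemma is indeed one new ingredient, but the bramble argument of Theorem~\ref{thm:simple} is the other, and it is what eliminates the very threading losses you correctly identify as the bottleneck.
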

While neither the results of~\cite{DvorakGT12} nor~\cite{KR18} have a direct application in the approximability of treedepth,
these topics are tightly linked with each other and we expect that a finer understanding of treedepth obstructions
is necessary to provide more efficient algorithms computing or approximating the treedepth of a graph.

\paragraph*{Our results.}
Our main graph-theoretical result is the following statement, improving upon the work of Kawarabayashi and Rossman~\cite{KR18}.

\begin{theorem}\label{thm:main}
Let $G$ be a graph of treewidth $\tw(G)$ and treedepth $\td(G)$.
Then there exists a subcubic tree $H$ that is a subgraph of $G$ and is of
treedepth at least 
\[ \frac{\td(G)}{\tw(G) + 1} \cdot \frac{\log((1+\sqrt{5})/2)}{\log(3)}. \]
\end{theorem}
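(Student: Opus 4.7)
The plan is to decompose the proof into two independent steps matching the two factors in the bound. Step~(A) is a reduction from $G$ to a tree subgraph $T\subseteq G$ with $\td(T)\ge\td(G)/(\tw(G)+1)$, losing exactly the factor $1/(\tw(G)+1)$. Step~(B) is a reduction from an arbitrary tree to a subcubic subtree losing the factor $\log_3\phi$, where $\phi=(1+\sqrt{5})/2$; this is the main technical ingredient advertised in the abstract. Composing Steps~(A) and~(B) yields Theorem~\ref{thm:main}.

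For Step~(A) I would build on Lemma~\ref{lem:tw2td}. Fix an optimal tree decomposition $(S,\beta)$ of $G$ of width $\tw(G)$; the lemma gives $\td(G)\le(\tw(G)+1)\cdot\td(S)$, so the abstract decomposition tree $S$ has $\td(S)\ge\td(G)/(\tw(G)+1)$. The next task is to realize a tree no worse than $S$ as a subgraph of $G$. The plan is to pick a representative vertex $\rho(s)\in\beta(s)$ in each bag and, for every edge $ss'\in E(S)$, use a vertex $x\in\beta(s)\cap\beta(s')$ together with the connectivity of the subgraphs induced by bags containing $\rho(s),x$ and $\rho(s'),x$ in $G$, to link $\rho(s)$ to $\rho(s')$ by an internally disjoint path in $G$. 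Gluing these paths produces a subdivision $T$ of $S$ sitting inside $G$. Since treedepth of a tree is non-decreasing under edge subdivisions (as $S$ is a minor of its subdivision, and treedepth is minor-monotone), we get $\td(T)\ge\td(S)\ge\td(G)/(\tw(G)+1)$.

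Step~(B) is the combinatorial heart of the proof: every tree $T$ of treedepth $d$ contains a subcubic subtree $H$ with $\td(H)\ge d\cdot\log_3\phi$. The plan is an induction along a treedepth decomposition of $T$, where at each branching vertex at most two children can be retained in order to stay subcubic. The crucial idea is to track \emph{two} ``budgets'' per rooted subtree instead of one --- informally, the best subcubic treedepth achievable when the root is taken as the top of the subcubic decomposition and when it is instead relegated below --- and to carry out a greedy pairing of the two best children under this two-budget accounting. This propagates as a Fibonacci-type recurrence of the shape $a_{n+1}=a_n+a_{n-1}$ whose characteristic root is $\phi$, while high-degree vertices in $T$ can be absorbed by losing at most one unit of treedepth in the subcubic image per factor of $3$ growth in the original treedepth budget; the composition of these two contributions yields exactly the constant $\log_3\phi$. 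The main obstacle is calibrating the two-budget potential on rooted subtrees so that the Fibonacci recurrence is tight and so that any tree of treedepth $d$ supports potential at least $\phi^{\Omega(d)}$, rather than a strictly worse constant that a one-budget argument would produce. Once this subcubic-extraction lemma is established, Theorem~\ref{thm:main} follows by combining Steps~(A) and~(B) and multiplying the two bounds.
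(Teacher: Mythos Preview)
Your Step~(B) is correct in spirit and matches the paper's Lemma~\ref{lem:trees}: the subcubic extraction is carried out via Sch\"affer's ranking algorithm, keeping at each vertex the two children of largest potential, and comparing two potential functions (one with base~$3$, one skewed with base~$\varphi$) whose interplay produces the constant $\log_3\varphi$.

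Your Step~(A), however, has a genuine gap. The inequality $\td(S)\ge\td(G)/(\tw(G)+1)$ coming from Lemma~\ref{lem:tw2td} is fine, but the claimed embedding of a subdivision of the decomposition tree $S$ into $G$ is not justified: the paths you propose between representatives $\rho(s)$ and $\rho(s')$ have no reason to be internally disjoint, so their union need not be a tree, let alone a subdivision of $S$. Concretely, take $G=K_{1,n}$ with centre~$c$. An optimal-width tree decomposition can take $S$ to be a path on $n$ nodes with bags $\{c,\ell_i\}$, so $\td(S)=\Theta(\log n)$; yet every tree subgraph of $G$ is a substar of treedepth at most~$2$, so no subdivision of $S$ lives in $G$ and no tree subgraph attains $\td(S)$. (The \emph{final} bound is not violated here since $\td(G)/(\tw(G)+1)=1$, but your construction collapses.) Realizing the decomposition tree inside $G$ is precisely the delicate step handled by the Kawarabayashi--Rossman machinery of Lemma~\ref{lem:auxiliary}, and even that only produces a bounded-degree tree in $G$ at the cost of a $\log\tau$ factor; following that route one recovers the weaker Theorem~\ref{thm:apx}, not Theorem~\ref{thm:main}.

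The paper sidesteps the embedding problem entirely by proving Step~(A) in the contrapositive with a bramble argument (Theorem~\ref{thm:simple}): if no tree subgraph of a connected $G$ has treedepth exceeding~$b$, then the family of all treedepth-$b$ tree subgraphs is a bramble, because two vertex-disjoint such trees joined by a shortest path would form a tree of treedepth~$>b$. Hence a single bag of size at most $\tw(G)+1$ meets every element of this bramble; delete it and induct on~$b$ to obtain $\td(G)\le(\tw(G)+1)\,b$. This yields a tree subgraph of $G$ of treedepth at least $\td(G)/(\tw(G)+1)$ with no embedding step and no logarithmic loss, after which Lemma~\ref{lem:trees} finishes as you describe.
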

In other words, Theorem~\ref{thm:main} states that there exists a constant $C = \frac{\log(3)}{\log((1+\sqrt{5})/2)}$
such that for every graph $G$ and positive integers $a,b$,
   if the treedepth of $G$ is at least $Cab$, then the treewidth of $G$ is at least $a$ or $G$ contains a subcubic tree of treedepth $b$.
Since every subcubic tree of treedepth $d$ contains either a simple path of length $2^{\Omega(\sqrt{d})}$ 
or a subdivision of a full binary tree of depth $\Omega(\sqrt{d})$~\cite{KR18}, we have the following corollary.
\begin{corollary}\label{cor:main}
Let $G$ be a graph of treewidth $\tw(G)$ and treedepth $\td(G)$.
Then for some 
\[ h = \Omega\left(\sqrt{\td(G) / (\tw(G) + 1)}\right) \]
$G$ contains either a simple path of length $2^h$ or 
a subdivision of a full binary tree of depth $h$.

Consequently, there exists an absolute constant $C$ such that for every integer $k \geq 1$
and a graph $G$ of treedepth at least $C k^3$, either 
\begin{itemize}
\item $G$ has treewidth at least $k$, 
\item $G$ contains a subdivision of a full binary tree of depth $k$ as a subgraph, or
\item $G$ contains a path of length $2^k$.
\end{itemize}
\end{corollary}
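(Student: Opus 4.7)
The plan is to chain Theorem~\ref{thm:main} with the quoted fact from Kawarabayashi and Rossman~\cite{KR18} that any subcubic tree of treedepth $d$ contains a subdivision of a full binary tree of depth $\Omega(\sqrt{d})$ or a simple path of length $2^{\Omega(\sqrt{d})}$. Since both conclusions involve substructures of $G$, and subgraph containment is transitive, the composition is essentially immediate.

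Concretely, first I would apply Theorem~\ref{thm:main} to $G$ to obtain a subcubic tree $H \subseteq G$ of treedepth at least
\[ d \;=\; \frac{\td(G)}{\tw(G)+1} \cdot \frac{\log((1+\sqrt 5)/2)}{\log 3}, \]
so in particular $d = \Omega(\td(G)/(\tw(G)+1))$. Then I would invoke the Kawarabayashi--Rossman dichotomy on $H$: there exists $h = \Omega(\sqrt d) = \Omega(\sqrt{\td(G)/(\tw(G)+1)})$ such that $H$ (and hence $G$) contains either a subdivision of a full binary tree of depth $h$ or a simple path of length $2^h$. This gives the first part of the corollary.

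For the second part, I would contrapose: assume that $G$ has treewidth less than $k$ and contains neither a subdivision of a full binary tree of depth $k$ nor a path of length $2^k$. Applying the first part, one gets an $h = \Omega(\sqrt{\td(G)/k})$ for which $G$ contains one of the forbidden substructures of height $h$; this forces $h < k$ and therefore $\td(G) = O(k \cdot h^2) < O(k^3)$. Choosing the constant $C$ to be strictly larger than the hidden constant in the last inequality yields the desired $\td(G) < Ck^3$.

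The only delicate step is tracking the implicit constants from~\cite{KR18} carefully enough to fix the absolute constant $C$ in the final statement, but this is a purely numerical bookkeeping task rather than a conceptual obstacle, since Theorem~\ref{thm:main} already contains the sole nontrivial ingredient.
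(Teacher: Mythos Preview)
Your proposal is correct and matches the paper's own argument essentially verbatim: the paper derives the corollary directly from Theorem~\ref{thm:main} together with the cited fact from~\cite{KR18} that every subcubic tree of treedepth $d$ contains a path of length $2^{\Omega(\sqrt{d})}$ or a subdivision of a full binary tree of depth $\Omega(\sqrt{d})$, and the second part is an immediate instantiation with $a=k$ and $b=\Theta(k^2)$.
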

In other words, Corollary~\ref{cor:main} improves the bound $k^5 \log^2k$ of Kawarabayashi and Rossman~\cite{KR18} to $k^3$. 
We remark here that there are subcubic trees of treedepth $\Omega(h^2)$ that contains neither a path of length $2^h$ nor a subdivision
of a full binary tree of depth $h$,%
\footnote{It is straightforward to deduce such an example from the proof of~\cite{KR18}. We provide such an example in Section~\ref{sec:ex}.}
and thus the quadratic loss between the statements of Theorem~\ref{thm:main}
and Corollary~\ref{cor:main} is necessary.

Inside the proof of Theorem~\ref{thm:main} we make use of the following lemma that may be of independent interest.
This lemma is the main technical improvement upon the work of Kawarabayashi and Rossman~\cite{KR18}.

\begin{lemma}\label{lem:trees}
Every tree of treedepth $d$ contains a subcubic subtree of treedepth at least $\frac{\log((1+\sqrt{5})/2)}{\log(3)} d$.
\newline Furthermore, such a subtree can be found in polynomial time.
\end{lemma}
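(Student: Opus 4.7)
I would prove a strengthened inductive statement: for every tree $T$ and every designated vertex $v\in V(T)$, there exists a subcubic subtree $H\subseteq T$ with $v\in H$, $\deg_H(v)\le 2$, and $3^{\td(H)}\ge \phi^{\td(T)}$, where $\phi=(1+\sqrt{5})/2$. Taking $\log_3$ of both sides recovers the claimed bound $\td(H)\ge \log_3\phi\cdot\td(T)$. The strengthening is essential for induction: designating $v$ and enforcing the slack $\deg_H(v)\le 2$ leaves one slot at $v$ free for a possible future parent edge, which is precisely what is needed to stitch together subcubic subtrees coming from two different children of a common parent.

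The key building block is a short combining lemma: if $H_1$ and $H_2$ are disjoint subtrees with $\td(H_i)\ge k$ for $i=1,2$, and $u$ is a new vertex adjacent to exactly one vertex of each $H_i$, then $\td(H_1\cup H_2\cup\{u\})\ge k+1$. One proves this by examining the root $x$ of any hypothetical elimination tree of depth $\le k$ for the combined graph: if $x=u$, then the two remaining components still have treedepth $\ge k$ individually, a contradiction, whereas if $x$ lies in (say) $H_1$, the component of the deletion containing $u$ also contains all of $H_2$, so it has treedepth $\ge k$, again pushing the overall depth to $\ge k+1$.

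The main induction roots $T$ at $v$ with children $c_1,\ldots,c_m$ and rooted subtrees $T_{c_i}$ of treedepth $d_i$. The inductive hypothesis supplies subcubic $H_i\subseteq T_{c_i}$ with $c_i\in H_i$, $\deg_{H_i}(c_i)\le 2$, and $3^{\td(H_i)}\ge\phi^{d_i}$. One then assembles $H$ by attaching $v$ to zero, one, or two of the $H_i$ along the edges $vc_i$. When two children share the maximum depth $d^{\ast}=\max_i d_i$ (so $\td(T)=d^{\ast}+1$), the combining lemma yields $3^{\td(H)}\ge 3\cdot\phi^{d^{\ast}}\ge\phi^{d^{\ast}+1}$, using $3\ge\phi$. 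When the maximum is attained by a unique $c_1$ and simultaneously $\td(T)=d^{\ast}$, the single-child attachment already suffices, since $3^{\td(H_1)}\ge\phi^{d^{\ast}}=\phi^{\td(T)}$.

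The genuinely hard case is the asymmetric one: a unique heavy child $c_1$ with nevertheless $\td(T)=d^{\ast}+1$, because then attaching $v$ to $H_1$ alone does not gain a level and no other child is deep enough to pair with $H_1$ in the combining lemma. I expect this to be handled by refining the IH inside $T_{c_1}$: rather than extracting a single optimal $H_1$, one extracts a pair of subcubic subtrees whose treedepths satisfy a Fibonacci-type balance reflecting $\phi^k=\phi^{k-1}+\phi^{k-2}$, so that the deficit incurred at this level is offset amortized across the recursion, which is exactly the mechanism that yields the $\log_3\phi$ constant. The polynomial-time construction then follows for free as a bottom-up dynamic program that, at each rooted subtree and each candidate integer value of $\td(H)$, records the best achievable subcubic extension.
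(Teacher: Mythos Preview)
Your framework is sound: the strengthened invariant $3^{\td(H)}\ge\varphi^{\td(T)}$ with $v\in H$ and $\deg_H(v)\le 2$ is exactly the right shape, the combining lemma is correct, and your two easy cases go through cleanly. But the case you label ``genuinely hard'' is not a corner case to be patched later --- it is the entire content of the lemma and the only place the constant $\log_3\varphi$ is earned, and you have not given an argument for it.

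Concretely, when $c_1$ is the unique heavy child and $\td(T)=d^\ast+1$, you need an extra factor of $\varphi$ that cannot come from $v$ (adding a pendant does not raise treedepth in general) nor from any other child $c_i$ (whose $d_i$ may be arbitrarily small compared to $d^\ast$). Your proposed fix, ``extract a pair of subcubic subtrees inside $T_{c_1}$ with Fibonacci-balanced depths'', does not work as stated: $v$ is adjacent to $T_{c_1}$ only through $c_1$, so two disjoint subtrees of $T_{c_1}$ cannot both be hooked to $v$; and your inductive hypothesis carries only the single scalar $\td(T_{c_1})$, which is not enough information to locate a suitable deeper attachment point. Moreover the hard case can recur unboundedly many times down a single root-to-leaf spine (already for a path rooted at an endpoint), so any correct argument must be a global amortization, not a local two-step lookahead.

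The paper does not do an induction of your shape at all. It runs Sch\"affer's optimal vertex-ranking algorithm, records for each vertex its \emph{rank list} $L(v)$, and then keeps at every internal vertex the two children with lexicographically largest rank lists; the resulting subcubic tree $S$ is the output. The depth bound is proved via two potentials on rank lists: $\zeta(v)=\sum_{r\in L(v)}3^r$ and a skewed $\sigma(v)=\sum_i \varphi^{l_i-i}$ (where $l_0>l_1>\cdots$ are the elements of $L(v)$). One shows $\widetilde\zeta(v)\ge 1+\sum_{s\in C(v)}\widetilde\zeta(s)$ and $\sigma(v)\le 1+\sum_{s\in C(v)}\sigma(s)$, whence $\widetilde\zeta\ge\sigma$ by induction, and the bounds $\log_3\zeta<\td$ and $\log_\varphi\sigma\ge\td-1$ finish. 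The skew ``$-i$'' in $\sigma$ is precisely the amortization device that stands in for your missing Fibonacci balance; the inequality $\sigma(v)\le 1+\sum_{s\in C(v)}\sigma(s)$ requires a careful case analysis on how $\col(v)$ interacts with the children's rank lists, and this analysis cannot be reproduced from the treedepth values $d_i$ alone.
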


Lemma~\ref{lem:trees}, developed to prove Theorem~\ref{thm:main}, have some implications on the approximability of treedepth.
We combine it with the machinery of Kawarabayashi and Rossman~\cite{KR18} to improve upon Lemma~\ref{lem:td-cheap-apx} as follows.
\begin{theorem}\label{thm:apx}
Given a graph $G$, one can in polynomial time compute a treedepth decomposition of $G$
of width $\Oh(\treedepth(G) \cdot \treewidth(G) \log^{3/2} \treewidth(G))$. 
\end{theorem}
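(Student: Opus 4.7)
The plan is to refine the tradeoff underlying Lemma~\ref{lem:td-cheap-apx} by replacing its crude estimate $\treedepth(T)\leq\log n$ with a much sharper one obtained via Lemma~\ref{lem:trees} and the Kawarabayashi--Rossman machinery. As a first step, using the polynomial-time approximation algorithm of Feige, Hajiaghayi and Lee we compute a tree decomposition $(T,\beta)$ of $G$ of width $t = \Oh(\tw(G)\sqrt{\log \tw(G)})$ with $\Oh(n)$ bags, and we dispose of the small-$\treedepth$ regime by the same tradeoff with the algorithm of Reidl et al.~\cite{ReidlRVS14} used in Lemma~\ref{lem:td-cheap-apx}. So we may assume $\treedepth(G)\geq(\log n)/t$.

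The heart of the proof is an \emph{improved conversion primitive}: given a subgraph $H\subseteq G$ together with a tree decomposition of $H$ of width $t$, compute in polynomial time a set $S\subseteq V(H)$ of size $\Oh(t\log t)$ whose removal strictly decreases the treedepth of every connected component, i.e.\ such that $\treedepth(C)<\treedepth(H)$ for every component $C$ of $H-S$. Iterating this primitive $\treedepth(G)$ times, placing each consecutive $S$ as the next block of the forest under construction, yields a treedepth decomposition of $G$ of width
\[
\treedepth(G)\cdot\Oh(t\log t) \;=\; \Oh(\treedepth(G)\cdot\tw(G)\log^{3/2}\tw(G)),
\]
which is exactly the target bound. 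Since we do not know $\treedepth(G)$ in advance, the iteration simply runs until no further reduction is possible, at which point the remaining graph is empty.

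To realise the primitive, we mimic the Kawarabayashi--Rossman strategy and perform $\Oh(\log t)$ rounds of balanced-separator extraction in the tree $T$ of the current decomposition, each round pulling out a single bag of size at most $t+1$ and contributing its vertices to $S$. The critical claim, which is where Lemma~\ref{lem:trees} (equivalently, the subcubic-tree content of Theorem~\ref{thm:main}) enters, is that $\Oh(\log t)$ such rounds suffice to drop the treedepth of every surviving component by at least one. The argument is by contraposition: if after $\Omega(\log t)$ rounds some component still has the same treedepth as $H$, then the chain of balanced separators traced through $T$ can be lifted, along bag-internal spanning substructures, to an explicit subtree of $G$ of large treedepth, from which Lemma~\ref{lem:trees} then extracts a subcubic subtree of $G$ of treedepth strictly larger than $\treedepth(G)$, a contradiction.

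The main technical obstacle will be making this final contrapositive quantitative: precisely relating the length of a non-reducing chain of balanced separators in $T$ to the treedepth of a subtree of $G$ that Lemma~\ref{lem:trees} can feed into a contradiction. Once this combinatorial step is in place, the rest of the argument is bookkeeping: verifying that each round of balanced-separator extraction is polynomial, that the extracted bags indeed assemble into a valid treedepth decomposition of $G$, and that the product $\treedepth(G)\cdot(\text{rounds per phase})\cdot(\text{bag size})$ collapses to the claimed $\Oh(\treedepth(G)\cdot\tw(G)\log^{3/2}\tw(G))$.
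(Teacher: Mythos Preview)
Your overall arithmetic is right---if the ``primitive'' worked, iterating it $\td(G)$ times would indeed yield the claimed width---but the primitive itself is not justified, and I do not see how your contrapositive can be made to work. You propose that if after removing $\Omega(\log t)$ balanced-separator bags some component $C$ still has $\td(C)=\td(H)$, then the chain of removed bags lifts to a subtree of $G$ whose treedepth, via Lemma~\ref{lem:trees}, exceeds $\td(G)$. But that chain involves only $\Oh(t\log t)$ vertices of $G$, so any tree supported on them has treedepth $\Oh(\log(t\log t))=\Oh(\log t)$; this is nowhere near $\td(G)$ in general, and the surviving component $C$ plays no role in your lifted object. There is no mechanism in your sketch that turns ``treedepth did not drop'' into ``$G$ contains a tree of treedepth larger than $\td(G)$''. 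You also never make the tree decomposition \emph{greedy}, which is what allows the Kawarabayashi--Rossman lifting (their Section~5 construction) to apply at all.

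The paper's route is both different and considerably simpler. After approximating treewidth it passes to a \emph{greedy} tree decomposition $(T,\beta)$ with bags of size at most $\tau=\Oh(\tw(G)\sqrt{\log\tw(G)})$ and then bounds $\td(T)$ directly: apply Lemma~\ref{lem:trees} to $T$ to get a subcubic subtree $S$ with $\td(S)=\Omega(\td(T))$, use the Kawarabayashi--Rossman construction to lift $S$ to a subtree $F\subseteq G$ of maximum degree $\leq\tau+2$ with $V(S)\subseteq V(F)$, and compare vertex counts to conclude $\td(F)=\Omega(\td(T)/\log\tau)$. Since $F\subseteq G$ this gives $\td(T)=\Oh(\td(G)\log\tau)$. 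A single application of Lemma~\ref{lem:tw2td} then yields width $\tau\cdot\td(T)=\Oh(\td(G)\,\tw(G)\log^{3/2}\tw(G))$. No iterated primitive, no per-step treedepth drop, and no need for the Reidl et al.\ tradeoff are required.
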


The result of Kawarabayashi and Rossman~\cite{KR18} has been also an important ingredient in the study of
\emph{linear colorings}~\cite{LinCol}. A coloring $\col : V(G) \to \mathbb{Z}$ of a graph $G$
is a \emph{linear coloring} if for every (not necessarily induced) path $P$ in $G$ there exists 
a vertex $v \in V(P)$ of unique color $\col(v)$ on $P$. Clearly, each centered coloring is a linear coloring,
but the minimum number of colors needed for a linear coloring can be much smaller than the treedepth of a graph. 
Kun et al.~\cite{LinCol} provided a polynomial relation between the treedepth and the minimum number of colors
in a linear coloring; by replacing their usage of~\cite{KR18} by our result 
(and using an improved bound for the excluded grid theorem~\cite{ChuzhoyT19}) we obtain an improved bound.
\begin{theorem}\label{thm:lincol}
There exists a polynomial $p$ such that for every integer $k$ and graph $G$, if the treedepth of $G$
is at least $k^{19} p(\log k)$, then every linear coloring of $G$ requires at least $k$ colors.
\end{theorem}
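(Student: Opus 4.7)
The plan is to re-run the argument of Kun et al.~\cite{LinCol} relating treedepth and linear coloring, but substituting two improved ingredients: Corollary~\ref{cor:main} in place of the~\cite{KR18} excluded-minor theorem, and the Chuzhoy--Tan excluded-grid bound~\cite{ChuzhoyT19} in place of the earlier polynomial grid theorem. I work in the contrapositive: assuming $G$ admits a linear coloring with fewer than $k$ colors, I would bound $\td(G)$ by $k^{19}p(\log k)$ for a suitable polynomial $p$.

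First, I would apply Corollary~\ref{cor:main} with a parameter $t$ to be fixed at the end: if $\td(G) \geq C t^3$, then $G$ contains either (i) a subgraph of treewidth at least $t$, (ii) a subdivision of a full binary tree of depth $t$, or (iii) a path of length $2^t$. Since the existence of a linear coloring with few colors passes to subgraphs, cases (ii) and (iii) are handled by folklore: for trees the linear-coloring number coincides with the treedepth, so a subdivided depth-$t$ binary tree forces $\Omega(t)$ colors, and likewise a path of length $2^t$ forces $\Omega(t)$ colors. Taking $t = \Theta(k)$ thus discharges these two cases, and it remains to handle case (i).

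In case (i), I would invoke Chuzhoy--Tan: treewidth at least $m^{9+o(1)}$ guarantees an $m \times m$ grid minor, and hence an elementary wall $W_m$ of order $m$ as a topological subgraph. The core technical contribution of~\cite{LinCol} is a polynomial lower bound of the form ``every linear coloring of $W_m$ uses at least $m^{\beta}/q(\log m)$ colors'' for a fixed constant $\beta$ and polynomial $q$. Solving $m^{\beta}/q(\log m) \geq k$, then back-propagating through $\tw(G) \geq m^{9+o(1)}$ and finally $\td(G) \leq C t^3$, produces a bound of the form $\td(G) \leq k^{27/\beta}\, p(\log k)$; the numerical value of $\beta$ extracted from~\cite{LinCol} (which we reuse verbatim), combined with the $o(1)$ slack in the grid-theorem exponent absorbed into $p$, yields the final exponent~$19$ claimed in the statement.

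The main obstacle is entirely quantitative: one has to verify that the wall-to-coloring analysis of~\cite{LinCol} is sufficiently modular for Corollary~\ref{cor:main} and~\cite{ChuzhoyT19} to plug in as black boxes, and then carefully track the polylogarithmic factors through the chain of polynomial reductions (treedepth $\to$ treewidth $\to$ wall subgraph $\to$ lower bound on number of colors), making sure no hidden $\log^{\Omega(1)} k$ slack pushes the exponent above~$19$.
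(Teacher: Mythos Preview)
Your plan has a genuine quantitative gap that prevents you from reaching the exponent~$19$. The problem is that you invoke Corollary~\ref{cor:main}, which ties all three outcomes to a \emph{single} parameter $t$ via the cubic bound $\td(G)\ge Ct^3$. In the treewidth branch you need $t$ large enough that $\tw(G)\ge t$ forces a $\Theta(k^2)\times\Theta(k^2)$ grid minor (since Lemma~\ref{lem:grid-lincol} only gives $\Omega(\sqrt{m})$ colours from an $m\times m$ grid minor, i.e.\ $\beta=1/2$ in your notation). By Theorem~\ref{thm:gmt} this requires $t\ge (k^2)^9 p_{\mathrm{GMT}}(\log k)=k^{18}\cdot\mathrm{polylog}(k)$, and plugging this into $Ct^3$ yields $k^{54}\cdot\mathrm{polylog}(k)$, not $k^{19}$. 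Your formula $27/\beta$ confirms this: with $\beta=1/2$ you get $54$, and no value $\beta\le 1$ can give $19$.

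The paper avoids this by applying Theorem~\ref{thm:main} directly rather than Corollary~\ref{cor:main}. Theorem~\ref{thm:main} decouples the two thresholds: if $\td(G)\ge Cab$ then either $\tw(G)\ge a$ or $G$ contains a \emph{subcubic} tree of treedepth at least $b$. One then takes $a$ of order $k^{18}\cdot\mathrm{polylog}(k)$ for the grid branch and $b=\Theta(k)$ for the tree branch, and the product $ab$ gives $k^{19}\cdot\mathrm{polylog}(k)$. The subcubic tree case is handled not by the ``folklore'' equality you cite (which is false for trees in general) but by Lemma~\ref{lem:tree-lincol}: a tree of treedepth $d$ and maximum degree $\Delta$ needs at least $d/\log_2\Delta$ colours in any linear coloring, so a subcubic tree of treedepth $k\log_2 3$ already forces $k$ colours. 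Passing through the binary-tree/path dichotomy of Corollary~\ref{cor:main} is unnecessary and, because of the coupled cubic bound, actively harmful for the exponent.
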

The previous bound of~\cite{LinCol} is $k^{190} p(\log k)$. 

\medskip

Brief preliminaries on tree decompositions, treewidth, and brambles can be found in Section~\ref{sec:prelims}.
After proving Lemma~\ref{lem:trees} in Section~\ref{sec:trees}, 
we prove Theorem~\ref{thm:main} in Section~\ref{sec:simple}.

Then in Section~\ref{sec:kr} we provide a proof a lemma
 that combines the machinery of Kawarabayashi and Rossman~\cite{KR18}
with Lemma~\ref{lem:trees} and conclude the proof of Theorem~\ref{thm:apx} using it.
Finally, Theorem~\ref{thm:lincol} is proven in Section~\ref{sec:lincol}.

\section{Preliminaries}\label{sec:prelims}
The symbol $\log_p$ stands for base-$p$ logarithm and $\log$ stands for $\log_2$.
We denote $\varphi = \frac{1 + \sqrt{5}}{2}$; note that $\varphi$ is chosen in a way so that $\varphi^2 = \varphi + 1$ and $\varphi > 1$.

We need a few basic notions concerning tree decompositions and treewidth.
Recall that a \emph{tree decomposition} of a graph $G$ is a pair $(T,\beta)$ where $T$ is a rooted tree and $\beta : V(T) \to 2^{V(G)}$
is such that for every $v \in V(G)$ the set $\{t \in V(T)~|~v \in \beta(t)\}$ induces a connected nonempty subtree of $T$
and for every $uv \in E(G)$ there exists $t \in V(T)$ with $u,v \in \beta(t)$. 
The width of a tree decomposition $(T,\beta)$ is $\max_{t \in V(T)} |\beta(t)|-1$ and the treewidth of a graph
is the minimum possible width of its tree decomposition.

A \emph{bramble} in a graph $G$ is a family $\mathcal{B}$ of connected subgraphs of $G$ such that for every
$B_1,B_2 \in \mathcal{B}$, either $B_1$ and $B_2$ share a vertex or there is an edge of $G$ with one endpoint in $B_1$
and one endpoint in $B_2$. Standard arguments (see e.g.~\cite{diestel}) show the following:
\begin{lemma}\label{lem:hit-bramble}
Let $G$ be a graph, $(T,\beta)$ be a tree decomposition of $G$, and let $\mathcal{B}$ be a bramble in $G$.
Then there exists $t \in V(T)$ such that for every $B \in \mathcal{B}$ it holds that $\beta(t) \cap V(B) \neq \emptyset$.
\end{lemma}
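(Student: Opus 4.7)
The plan is to prove this via the classical Helly property for subtrees of a tree. For each $B \in \mathcal{B}$, define $T_B = \{t \in V(T) \mid \beta(t) \cap V(B) \neq \emptyset\}$. The goal is to show that $\bigcap_{B \in \mathcal{B}} T_B \neq \emptyset$, which is exactly the conclusion sought.

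First I would verify that each $T_B$ is a nonempty connected subtree of $T$. Nonemptiness is immediate since $B$ is nonempty and every vertex of $G$ appears in some bag. For connectedness, observe that $T_B = \bigcup_{v \in V(B)} T_v$, where $T_v = \{t \in V(T) \mid v \in \beta(t)\}$ is a connected subtree by the definition of a tree decomposition. Since $B$ is connected, one can order $V(B)$ so that each new vertex is adjacent in $B$ to a previously added one; for such an adjacent pair $uv \in E(B)$, the subtrees $T_u$ and $T_v$ share a bag (since $uv \in E(G)$ forces some bag to contain both endpoints), so the union of subtrees remains connected at every step.

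Next I would translate the bramble axiom into pairwise intersection of the $T_B$'s. For any $B_1, B_2 \in \mathcal{B}$, either they share a vertex $v$, in which case any bag containing $v$ lies in $T_{B_1} \cap T_{B_2}$, or there is an edge $uv \in E(G)$ with $u \in V(B_1)$ and $v \in V(B_2)$, and then any bag containing both $u$ and $v$ lies in $T_{B_1} \cap T_{B_2}$. Either way, $T_{B_1} \cap T_{B_2} \neq \emptyset$.

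Finally I would invoke the Helly property: a family of pairwise intersecting subtrees of a tree has a common vertex. In the finite case this is a short induction on the number of subtrees, and in general it follows from the fact that subtrees of a tree are precisely the convex sets in a median/tree-metric structure. This yields some $t \in \bigcap_{B} T_B$, which is the desired bag. The main (and essentially only) subtlety is handling potential infiniteness of $\mathcal{B}$, but for the intended applications one may either assume $\mathcal{B}$ is finite (it suffices to take brambles of bounded size) or observe that $T$ can be taken finite so that only finitely many distinct subtrees $T_B$ arise; thus the Helly step reduces to the finite case.
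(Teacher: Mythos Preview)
Your argument is correct and is precisely the standard Helly-property proof one finds in textbooks. The paper itself does not give a proof of this lemma at all; it merely cites Diestel and states the result, so there is nothing to compare against beyond noting that your approach is the canonical one. One minor remark: your caveat about infiniteness of $\mathcal{B}$ is harmless but unnecessary here, since the paper works with finite graphs and finite tree decompositions, so only finitely many distinct sets $T_B$ can arise and the finite Helly property suffices without further comment.
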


\section{Subcubic subtrees of trees of large treedepth}\label{sec:trees}
This section focuses on proving Lemma~\ref{lem:trees}.

Sch\"{a}ffer~\cite{Schaffer89} proved that there is a linear time algorithm for finding a vertex ranking with minimum number
of colors of a~tree $T$.
We follow \cite{LinCol} for a good description of its properties.

In original Sch\"{a}ffer’s algorithm ranks are starting from $1$, however
for the ease of exposition let us assume that ranks are starting from $0$.
That is, the algorithm constructs a vertex ranking $\col : V(T) \to \{0,1,2,\ldots\}$
trying to minimize the maximum value attained by $\col$.
Assume that $T$ is rooted in an arbitrary vertex and for every $v \in V(T)$ let $T_v$ be the subtree rooted at $v$.

Of central importance to Sch\"{a}ffer’s algorithm are what we will refer to as \textit{rank lists}.
For a rooted tree $T$, the rank list $L(T)$ for vertex ranking $\col$ consists of these ranks $i$ for which there exists a path $P$ starting from the
root and ending in a vertex $v$ with $\col(v) = i$ such that for every $u \in V(P) \setminus \{v\}$ we have $\col(u) < \col(v)$, that is,
$v$ is the unique vertex of maximum rank on $P$. More formally:

\begin{definition} For a vertex ranking $\col$ of tree $T$, the rank list of $T$, denoted
$L(T)$, can be defined recursively as $L(T) = L(T \setminus T_v) \cup \{\col(v)\}$ where $v$ is the
vertex of maximum rank in $T$.
\end{definition}

Sch\"{a}ffer’s algorithm arbitrarily roots $T$ and builds the ranking from the leaves
to the root of $T$, computing the rank of each vertex from the rank lists of each
of its children. For brevity, we denote $L(v) = L(T_v)$ for every $v$ in $T$.

\begin{proposition} Let $\col$ be a~vertex ranking of $T$ produced by Sch\"{a}ffer’s
algorithm and let $v \in T$ be a~vertex with children $u_1, \ldots, u_l$. If $x$ is the largest
integer appearing on rank lists of at least two children of $v$ (or~$-1$ if all such rank
lists are pairwise disjoint) then $\col(v)$ is the smallest integer satisfying $\col(v) > x$
and $\col(v) \not\in \bigcup_{i=1}^{l} L(u_i)$.
\end{proposition}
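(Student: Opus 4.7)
The proposition specifies the value Schäffer's algorithm assigns to $v$. Since the algorithm processes vertices bottom-up and is greedy in minimizing the assigned rank, the essence of the argument is a characterization: the two stated conditions on $\col(v)$ are precisely the conditions under which extending the already-built ranking on the subtrees $T_{u_1},\ldots,T_{u_l}$ by choosing a value for $\col(v)$ still yields a valid vertex ranking of $T_v$. Assuming inductively that each restriction $\col|_{T_{u_i}}$ is a valid vertex ranking with rank list $L(u_i)$, identifying the smallest legal value for $\col(v)$ is equivalent to identifying the smallest integer satisfying the two stated conditions.

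For necessity, suppose $\col(v) = r$ extends the ranking validly. If $r \in L(u_i)$ for some $i$, take a witnessing downward path $P$ in $T_{u_i}$ from $u_i$ to some $w$ with $\col(w) = r$ on which $w$ is the unique maximum; prepending $v$ (which is adjacent to $u_i$ in $T$) yields a path in $T_v$ containing two rank-$r$ vertices and nothing of higher rank, contradicting the ranking property on a connected subgraph. If $r \leq x$, pick two children $u_i \neq u_j$ with $x \in L(u_i) \cap L(u_j)$ and their respective witnessing paths $P_i, P_j$; concatenating $P_i$, $v$, $P_j$ produces a path in $T_v$ whose two rank-$x$ endpoints are both maxima, giving the same contradiction.

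For sufficiency, assume $r > x$ and $r \notin \bigcup_i L(u_i)$, and consider an arbitrary connected subgraph $H$ of $T_v$ (necessarily a subtree of $T_v$). If $v \notin H$, then $H$ lies in a single $T_{u_i}$ and the inductive hypothesis applies. Otherwise, let $H_i = H \cap T_{u_i}$ (a subtree of $T_{u_i}$ containing $u_i$ whenever nonempty) and let $M_i$ be the maximum rank in $H_i$. If all $M_i < r$, then $v$ is the unique maximum of $H$. If some $M_{i_0} = r$, then the unique maximum of $H_{i_0}$ guaranteed by induction has rank $r$, and the $u_{i_0}$-to-it path inside $H_{i_0}$ witnesses $r \in L(u_{i_0})$, a contradiction. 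The remaining case $\max_i M_i > r$ is the main subtle point: one must rule out that this maximum value $M$ is achieved in two different $H_i, H_j$. If it were, then the analogous witnessing-path argument applied inside each of the two subtrees would place $M$ into both $L(u_i)$ and $L(u_j)$, forcing $x \geq M$ and contradicting $x < r < M$. Hence $M$ occurs in exactly one $H_{i_0}$, where by induction it is attained by a single vertex, giving the unique maximum of $H$. Combining necessity and sufficiency with the minimality enforced by Schäffer's algorithm completes the proof.
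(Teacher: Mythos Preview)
The paper does not provide a proof of this proposition; it is stated without proof as a description of Sch\"affer's algorithm, following~\cite{LinCol} and the original source~\cite{Schaffer89}. Your argument is a correct and self-contained justification: you characterize, via the necessity and sufficiency directions, the smallest rank that can be assigned to $v$ while keeping the bottom-up extension a valid vertex ranking of $T_v$, which is exactly what the greedy algorithm selects. The one step worth making explicit is that when $H_i = H \cap T_{u_i}$ is nonempty it must contain $u_i$ (since $H$ is connected and contains $v$), so the tree path from $u_i$ to the unique maximum of $H_i$ indeed lies inside $H_i$; you note this parenthetically, and with it the rank-list membership claims go through.
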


For a node $v \in V(T)$, and vertex ranking $\col$, the following potential
is pivotal to the analysis of Sch\"{a}ffer's algorithm.
Let $l_0 > l_1 > \ldots > l_{|L(v)| - 1}$ be the elements of $L(v)$ sorted in decreasing order.
\[ \zeta(v) = \sum_{r \in L(v)} 3^r = \sum_{i=0}^{|L(v)|-1} 3^{l_i}. \]
When we write $\zeta(T)$ for some tree $T$ we refer to $\zeta(s)$
where $s$ is a root of $T$.
For our purposes, we will also use a skewed version of potential function with a different base
\[ \sigma(v) = \sum_{i=0}^{|L(v)|-1} \varphi^{l_i - i}, \]
where again $l_0 > l_1 > \ldots > l_{|L(v)| - 1}$ are elements of $L(v)$ sorted in decreasing order.
Throughout this section, when focusing on one node $v \in V(T)$, we use notation that $l_i$ is $i-$th element of set $L(v)$ when sorted in decreasing order
and when $0-$based indexed.

Let us start with proving following two bounds that estimate $\td(T)$ in terms of $\zeta(T)$ and $\sigma(T)$.

\begin{claim} \label{cl:bd1} $\log_{\varphi}(\sigma(T)) \ge \td(T) - 1$.
\end{claim}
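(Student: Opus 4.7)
The plan is to reduce the inequality $\log_{\varphi}(\sigma(T)) \ge \td(T) - 1$ to the bound $\sigma(T) \ge \varphi^{\td(T) - 1}$, which follows from keeping just one term of the sum defining $\sigma(T)$. Specifically, since $\varphi > 1$, every summand $\varphi^{l_i - i}$ is positive, so trivially
\[ \sigma(T) \;=\; \sum_{i=0}^{|L(T)|-1} \varphi^{l_i - i} \;\ge\; \varphi^{l_0 - 0} \;=\; \varphi^{l_0}. \]
It therefore suffices to show $l_0 = \td(T) - 1$.

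To this end, I would argue first that $l_0$ equals the maximum rank assigned by Sch\"{a}ffer's ranking $\col$ to any vertex of $T$. On the one hand, $L(T)$ contains only ranks that are assigned to some vertex of $T$, so $l_0 \le \max_{v \in V(T)} \col(v)$. On the other hand, let $w \in V(T)$ realize this maximum rank $M$, and consider the unique path $P$ from the root of $T$ to $w$. Vertex $w$ must be the sole maximum-rank vertex on $P$: otherwise another vertex $u \in V(P)$ also has $\col(u) = M$, and on the sub-path of $P$ between $u$ and $w$ the ranking would fail to have a unique maximum, contradicting the definition of a vertex ranking. Hence $M \in L(T)$ by the path characterization, proving $l_0 = M$.

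Finally, I would invoke the optimality of Sch\"{a}ffer's algorithm: it produces a vertex ranking of $T$ that uses exactly $\td(T)$ colors, and because it greedily picks the smallest feasible rank at each vertex these colors are $\{0, 1, \ldots, \td(T) - 1\}$. In particular, the maximum rank $M$ equals $\td(T) - 1$, giving $l_0 = \td(T) - 1$ and hence $\sigma(T) \ge \varphi^{\td(T) - 1}$; taking $\log_{\varphi}$ on both sides closes the argument.

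The only point that requires any thought is the identification $l_0 = \td(T) - 1$; the argument above breaks it cleanly into a combinatorial half (the unique-maximum-on-a-path observation) and an algorithmic half (using that Sch\"{a}ffer's algorithm attains treedepth many consecutive ranks starting from $0$). Both halves are immediate from the definitions already recalled in the paper, so no genuine obstacle arises.
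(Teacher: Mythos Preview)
Your proof is correct and follows essentially the same approach as the paper: both bound $\sigma(T)$ below by its leading term $\varphi^{l_0}$ and identify $l_0$ with $\td(T)-1$. You simply spell out in more detail why the largest element of $L(T)$ equals the maximum rank (which the paper asserts without justification, or alternatively reads off immediately from the recursive definition $L(T) = L(T \setminus T_v) \cup \{\col(v)\}$ with $v$ of maximum rank).
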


\begin{claimproof}
We know that $L(T)$ is nonempty and its biggest element is equal to $\td(T)-1$ (we need to subtract one because
we use nonnegative numbers as ranks, not positive). Therefore we have
\[ \sigma(T) = \sum_{i=0}^{|L(T)|-1} \varphi^{l_i-i} \ge \varphi^{l_0} = \varphi^{\td(T) - 1}. \]
Hence, $\log_{\varphi}(\sigma(T)) \ge \td(T) - 1$, as desired.
\end{claimproof}

\begin{claim} \label{cl:bd2}  $\log_3(\zeta(T)) + \log_3(2) < \td(T) $.
\end{claim}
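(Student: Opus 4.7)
The plan is to exploit the fact that $\zeta(T)$ is a sum of powers of $3$ with distinct nonnegative exponents, bounded above by $l_0 = \td(T) - 1$, and then bound that sum by a geometric series.

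First I would argue that $l_0 = \td(T) - 1$. Indeed, the ranking produced by Sch\"{a}ffer's algorithm uses the minimum possible number of colors, which equals $\td(T)$ by the equivalence of treedepth and the vertex ranking number; since ranks are $0$-indexed in the convention adopted in this section, the maximum rank used is $\td(T) - 1$, and the maximum rank is precisely the largest element of $L(T)$ (recall that $L(T)$ contains the rank of the unique maximum-rank vertex of $T$).

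Next, since $l_0 > l_1 > \ldots > l_{|L(T)|-1}$ are distinct nonnegative integers, a simple induction on $i$ gives $l_i \leq l_0 - i$. Therefore
\[
  \zeta(T) \;=\; \sum_{i=0}^{|L(T)|-1} 3^{l_i} \;\leq\; 3^{l_0} \sum_{i=0}^{|L(T)|-1} 3^{-i} \;<\; 3^{l_0} \cdot \frac{1}{1 - 1/3} \;=\; \frac{3}{2} \cdot 3^{\td(T) - 1} \;=\; \frac{1}{2} \cdot 3^{\td(T)}.
\]
Rearranging gives $2\zeta(T) < 3^{\td(T)}$, and taking $\log_3$ of both sides yields $\log_3(\zeta(T)) + \log_3(2) < \td(T)$.

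There is essentially no obstacle here: the only subtlety is remembering that ranks are $0$-indexed in this section (otherwise the bound on $l_0$ would shift by one), and that the inequality is strict because the geometric series $\sum_{i \ge 0} 3^{-i}$ is only approached, not attained, by a finite truncation.
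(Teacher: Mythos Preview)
Your proof is correct and is essentially the same as the paper's: both bound $\zeta(T)$ by a geometric series in $3$ with top exponent $\td(T)-1$ to obtain $2\zeta(T) < 3^{\td(T)}$, then take $\log_3$. The only cosmetic difference is that the paper bounds $\sum_{r \in L(T)} 3^r \le \sum_{r=0}^{\td(T)-1} 3^r = (3^{\td(T)}-1)/2$ directly (getting strictness from the ``$-1$''), whereas you use $l_i \le l_0 - i$ and the infinite series (getting strictness from the truncation).
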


\begin{claimproof}
We have that 
\begin{align*} 
\zeta(T) = \sum_{r \in L(v)} 3^r &\le \sum_{r=0}^{\td(T)-1}3^r = \frac{3^{\td(T)} - 1}{2}, \\
2 \zeta(T) &\le 3^{\td(T)} - 1 < 3^{\td(T)} \\
\log_3(2) + \log_3(\zeta(T)) &< \td(T).
\end{align*} 
\end{claimproof}

We are ready to prove Lemma \ref{lem:trees}.
Given tree $T$ we want to produce a subcubic (i.e., maximum degree at most $3$) tree $S$ which is a subtree of $T$
and that fulfills $\td(S) > \td(T) \log_3(\varphi)$.

Let us start our algorithm by arbitrary rooting $T$ and computing rank lists using Sch\"{a}ffer’s algorithm.
Then for every vertex $v \in T$ we define $C(v)$ as a set of two children of $v$
that have the biggest value of $\zeta$ in case $v$ has at least two children, or all children otherwise.
Let us now define forest $F$ whose vertex set is the same as vertex set of $T$
where for every $v$ we put edges between $v$ and all elements of $C(v)$.
Clearly this is a forest consisting of subcubic trees which are subtrees of $T$ (where \textit{subtree} is understood as subgraph, not necessarily as some vertex $t$ along with all its descendants in a rooted tree).
Let $S$ be a tree of this forest containing root of $T$.
We claim that $S$ is that subcubic subtree of $T$ we are looking for.
Note that computing $F$ and thus $S$ can be trivially done in polynomial time.
Hence, we are left with proving that $\td(S) > \td(T) \log_3(\varphi)$.

Let us root every tree of $F$ in a vertex that was closest to root of $T$ in $T$.
Then compute rank lists for these trees using Sch\"{a}ffer’s algorithm. So now,
for every vertex we have two rank lists, one for $T$ and one for $F$.
Let us now denote these second ranklists as $\widetilde{L}(v)$ for $v \in V(T)$
and let us define function $\widetilde{\zeta}$ which will be similar potential function as $\zeta$, but operating
on rank lists $\widetilde{L}(v)$ instead of $L(v)$.
Following claim will be crucial.
\begin{claim} \label{claim:ineq_pots} 
For every $v \in V(T)$ it holds that $\widetilde{\zeta}(v) \ge \sigma(v)$.
\end{claim}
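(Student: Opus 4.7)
The proof will proceed by bottom-up induction on the subtree $T_v$. In the base case $v$ is a leaf of $T$, so $L(v) = \widetilde{L}(v) = \{0\}$ and hence $\widetilde{\zeta}(v) = \sigma(v) = 1$. For the inductive step, let the children of $v$ in $T$ be $u_1, \ldots, u_l$, labelled so that $\zeta(u_1) \ge \zeta(u_2) \ge \cdots \ge \zeta(u_l)$; by construction $C(v) \subseteq \{u_1, u_2\}$, and the children of $v$ in $F$ are exactly the elements of $C(v)$, while the other $u_j$'s become roots of their own trees in $F$. By induction $\widetilde{\zeta}(u_j) \ge \sigma(u_j)$ for every $j$. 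Sch\"{a}ffer's rule gives the recursive structure we will exploit: $\widetilde{L}(v)$ is the union of the new rank of $v$ in $F$ with the elements of $\widetilde{L}(u_1) \cup \widetilde{L}(u_2)$ strictly above that rank, whereas $L(v)$ similarly grows from $\col(v)$ and the high parts of \emph{all} $L(u_j)$'s.

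The main idea is to push the top element $l_0 = \col(v)$ of $L(v)$ across the inequality using the defining identity $\varphi^{l_0} = \varphi^{l_0-1} + \varphi^{l_0-2}$ of the golden ratio. This splits the top summand $\varphi^{\col(v)}$ of $\sigma(v)$ into two pieces, one for each surviving child in $C(v)$. The position shift $l_i - i$ in the definition of $\sigma$ is calibrated so that these two pieces, combined with the remaining tail of $\sigma(v)$ coming from elements of $L(u_j)$ above $\col(v)$ for $u_j \in C(v)$, match precisely what the inductive bound $\widetilde{\zeta}(u_j) \ge \sigma(u_j)$ delivers on the $F$ side; Sch\"{a}ffer's rule applied at $v$ in $F$ then converts the per-child bounds into a bound on $\widetilde{\zeta}(v)$.

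The hardest step will be accounting for the \emph{discarded} children $u_3, \ldots, u_l$ when $l \ge 3$: their rank lists contribute to $L(v)$, and hence to $\sigma(v)$, but have no direct counterpart in $\widetilde{L}(v)$. The plan is to exploit the ordering $\zeta(u_j) \le \zeta(u_2)$ for $j \ge 3$ together with the base gap between $3$ (used by $\widetilde{\zeta}$) and $\varphi < \sqrt{3}$ (used by $\sigma$), so that the single quantity $\widetilde{\zeta}(u_2)$ can absorb the aggregate $\sigma$-contribution of many low-$\zeta$ discarded siblings, with the index shift $l_i - i$ in $\sigma$ providing exactly the geometric slack required. The fact that the ratio of bases is $\log_3 \varphi$ suggests that the bookkeeping is essentially tight, matching the constant that ultimately appears in the statements of Lemma~\ref{lem:trees} and Theorem~\ref{thm:main}.
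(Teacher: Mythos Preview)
Your inductive frame (bottom-up on $T_v$, leaf base case $\widetilde{\zeta}(v)=\sigma(v)=1$) is the same as the paper's, but the inductive step as written contains a genuine misunderstanding and does not constitute a proof.

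First, the identification ``top element $l_0 = \col(v)$'' is false: $\col(v)$ is always the \emph{smallest} element of $L(v)$, since $L(v)=\{\col(v)\}\cup\{r\in\bigcup_j L(u_j):r>\col(v)\}$. Consequently $\varphi^{\col(v)}$ is not the top summand of $\sigma(v)$ (its contribution is $\varphi^{\col(v)-(|L(v)|-1)}$), and your picture of splitting the top term into ``one piece per retained child'' via $\varphi^{l_0}=\varphi^{l_0-1}+\varphi^{l_0-2}$ does not match the actual structure of $L(v)$.

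Second, and more seriously, the plan to let ``the single quantity $\widetilde{\zeta}(u_2)$ absorb the aggregate $\sigma$-contribution of many low-$\zeta$ discarded siblings'' cannot work as stated: the only handle the induction gives you on $\widetilde{\zeta}(u_2)$ is $\widetilde{\zeta}(u_2)\ge\sigma(u_2)$, which says nothing about $u_3,\ldots,u_l$. The paper avoids this by inserting an intermediate quantity and proving two \emph{decoupled} recursive inequalities,
\[
\widetilde{\zeta}(v)\ \ge\ 1+\sum_{s\in C(v)}\widetilde{\zeta}(s)
\qquad\text{and}\qquad
\sigma(v)\ \le\ 1+\sum_{s\in C(v)}\sigma(s),
\]
the first by a base-$3$ counting argument on $\widetilde{L}$, the second by a case analysis on $L(u_1),L(u_2)$ in which your golden-ratio telescoping and your ``geometric slack from the index shift'' intuitions are both genuinely used---but to compare $\sigma(v)$ against $\sigma(u_1)+\sigma(u_2)$, not against $\widetilde{\zeta}(u_2)$. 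In particular, the discarded children are handled entirely on the $\sigma$ side: the ordering $\zeta(u_2)\ge\zeta(u_j)$ forces the top of $L(u_2)$ to dominate, and $\sigma(u_2)\ge\varphi^{d}$ (with $d$ that top element) already swallows the entire tail of $\sigma(v)$; the $\widetilde{\zeta}$ side never sees $u_3,\ldots,u_l$ at all. Once both inequalities are established, the inductive hypothesis $\widetilde{\zeta}(u_j)\ge\sigma(u_j)$ chains them together in one line.
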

We first verify that Claim~\ref{claim:ineq_pots} implies Lemma~\ref{lem:trees}.
\begin{proof}[Proof of Lemma~\ref{lem:trees}.]
Using also Claims~\ref{cl:bd2} and~\ref{cl:bd1}
we infer that
\begin{align*}
\td(S) &> \log_3(\widetilde{\zeta}(S)) + \log_3(2) & \textrm{by Claim~\ref{cl:bd2} for $S$}\\ 
      &\ge \log_3(\sigma(T)) + \log_3(2) & \textrm{by Claim~\ref{claim:ineq_pots}}\\
    &= \log_{\varphi}(\sigma(T)) \cdot \log_3(\varphi) + \log_3(2) & \textrm{logarithm base change}\\
    &\ge (\td(T) - 1) \cdot \log_3(\varphi) + \log_3(2) & \textrm{by Claim~\ref{cl:bd1} for $T$}\\
  &= \td(T) \cdot \log_3(\varphi) - \log_3(\varphi) + \log_3(2) > \td(T) \cdot \log_3(\varphi). &
\end{align*}
\end{proof}

Thus it remains to prove Claim~\ref{claim:ineq_pots}.
To this end, we prove two auxiliary inequalities.
\begin{claim} \label{lem:pot3}
For every $v \in V(T)$ it holds that $\widetilde{\zeta}(v) \ge 1 + \sum_{s \in C(v)} \widetilde{\zeta}(s)$
\end{claim}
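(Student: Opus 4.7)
The plan is to unpack the recursive description of the rank list $\widetilde{L}(v)$ produced by the second run of Sch\"{a}ffer's algorithm (the one run on the subcubic forest $F$), and then bound the resulting arithmetic expression using the fact that $v$ has at most two children in $F$, namely the set $C(v)$.

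First I would recall from the quoted description of Sch\"{a}ffer's algorithm that, for $v$ viewed as the root of its subtree in $F$ with children $C(v)$, if we write $k$ for the rank of $v$ in the $F$-ranking then
\[ \widetilde{L}(v) \;=\; \{k\}\;\cup\;\bigcup_{s \in C(v)} \{\,r \in \widetilde{L}(s) : r > k\,\}. \]
Moreover by the algorithmic property, $k$ is strictly larger than every integer appearing on at least two of the $\widetilde{L}(s)$, and $k$ itself lies in none of them. In particular the union on the right is disjoint and $k$ is a genuinely new element.

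Translating this into the potential, and setting $m_s := \sum_{r \in \widetilde{L}(s),\, r \le k} 3^r$, the disjointness gives
\[ \widetilde{\zeta}(v) \;=\; 3^k \;+\; \sum_{s \in C(v)} \bigl(\widetilde{\zeta}(s) - m_s\bigr), \]
so the claim $\widetilde{\zeta}(v) \ge 1 + \sum_{s \in C(v)} \widetilde{\zeta}(s)$ is equivalent to
\[ 3^k \;\ge\; 1 + \sum_{s \in C(v)} m_s. \]

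The final inequality is the place where the construction of $F$ pays off: because $|C(v)| \le 2$ and $k$ is excluded from every $\widetilde{L}(s)$, each integer $r$ with $0 \le r < k$ contributes at most $2\cdot 3^r$ to $\sum_s m_s$ while $r = k$ contributes $0$, and hence
\[ \sum_{s \in C(v)} m_s \;\le\; 2 \sum_{r=0}^{k-1} 3^r \;=\; 3^k - 1, \]
which is exactly what is needed. The only point that really requires care is verifying the disjointness of the union in the formula for $\widetilde{L}(v)$ (together with $k \notin \bigcup_s \widetilde{L}(s)$); once this is in hand, the rest is a one-line geometric sum, and it becomes visible why the base $3$ in the definition of $\zeta$ was chosen precisely to match the subcubic cap $|C(v)| \le 2$ imposed by the construction of $F$.
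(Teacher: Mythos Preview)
Your proof is correct and follows essentially the same approach as the paper: both argue by comparing contributions of each power $3^c$ on the two sides, using that ranks above $\widetilde{\col}(v)$ occur in at most one child's rank list (and carry over to $\widetilde{L}(v)$), the rank $\widetilde{\col}(v)$ itself is new, and ranks below $\widetilde{\col}(v)$ appear at most twice across the (at most two) children so that their total is bounded by the geometric sum $3^{\widetilde{\col}(v)}-1$. Your version is slightly more explicit in first writing out $\widetilde{L}(v)=\{k\}\cup\bigcup_{s\in C(v)}\{r\in\widetilde{L}(s):r>k\}$ before doing the arithmetic, but the substance is identical.
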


\begin{claimproof}
We express every $\widetilde{\zeta}(x)$ for $x \in \{v\} \cup C(v)$
as a sum of powers of $3$ and count how many times each power occurs on both sides of this claimed inequality. 
Consider a summand $3^c$. If $c > \col(v)$ then, by the choice of $\col(v)$,
         $3^c$ appears at most once on the right side and if it appears there, then it appears on the left side as well, so contributions of summands of form $3^c$ for $c > \col(v)$ to both sides are equal. The summand $3^{\col(v)}$ appears once on the left side and does not appear on the right side.
         For $c < \col(v)$, the summands of form $3^c$ appear at most twice in $\sum_{s \in C(v)} \widetilde{\zeta}(s)$,
so their contribution to right side can be bounded from above by
$\sum_{c=0}^{\col(v)-1} 2 \cdot 3^c = 3^{\col(v)}-1$, so in fact $3^{\col(v)}$ from the left side contributes at least as much as remaining summands from the right side.
This finishes the proof of the claim.
\end{claimproof}
\begin{claim} \label{lem:potphi}
For every $v \in V(T)$ it holds that $\sigma(v) \le 1 + \sum_{s \in C(v)} \sigma(s)$
\end{claim}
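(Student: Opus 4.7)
The plan is to extend the rank-by-rank accounting used in the proof of Claim~\ref{lem:pot3} to the position-sensitive potential $\sigma$, using the identity $\varphi^2 = \varphi + 1$ to absorb residuals. Let $c = \col(v)$, and for each child $u$ of $v$ write $H_u = L(u) \cap (c, \infty)$ and $L_u^- = L(u) \cap (-\infty, c)$. The Sch\"{a}ffer rule $c > x$ ensures that the high-rank union $H = \bigsqcup_u H_u$ is disjoint, so $L(v) = \{c\} \sqcup H$. Enumerating $H$ in decreasing order gives
\[
\sigma(v) = \varphi^{c - |H|} + \sum_{h \in H} \varphi^{h - j_v(h)},
\]
where $j_v(h)$ denotes the zero-indexed decreasing position of $h$ in $L(v)$.

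For each $h \in H_{s_i}$ with $s_i \in C(v)$, the position $j_{s_i}(h)$ inside $L(s_i)$ is at most $j_v(h)$, so the term $\varphi^{h - j_v(h)}$ is dominated by the matching term $\varphi^{h - j_{s_i}(h)}$ appearing in $\sigma(s_i)$. Setting $H_o := H \setminus (H_{s_1} \cup H_{s_2})$, the task reduces to showing
\[
\varphi^{c - |H|} + \sum_{h \in H_o} \varphi^{h - j_v(h)} \le 1 + A + B,
\]
where $A := \sum_{s_i \in C(v)} \sum_{h \in H_{s_i}} \bigl(\varphi^{h - j_{s_i}(h)} - \varphi^{h - j_v(h)}\bigr) \ge 0$ records the ``savings'' from position shifts, and $B := \sum_{s_i \in C(v)} \sum_{\ell \in L_{s_i}^-} \varphi^{\ell - j_{s_i}(\ell)} \ge 0$ collects the low-rank contributions on the right.

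To control $H_o$, I first prove that $\zeta(u) \le \zeta(s)$ forces $\max L(u) \le \max L(s)$, via the elementary bound $\zeta(u) \in [3^{\max L(u)}, 3^{\max L(u)+1}/2)$; hence for every $u \notin C(v)$, $\max L(u) \le \max L(s_2)$, and the top ranks of $H$ all lie in $H_{s_1} \cup H_{s_2}$. Expanding each slack term via the telescoping $1 - \varphi^{-k} = \varphi^{-2}\sum_{i=0}^{k-1}\varphi^{-i}$ attributes each unit of slack in $A$ to a specific (shifter, shiftee) pair, after which each $h^* \in H_o$ can be matched with a $h \in H_{s_1} \cup H_{s_2}$ lying just below it in $H$; the inequalities $h^* > h$ and $j_v(h) \le j_v(h^*) + |\{h' \in H_{s_i}: h'<h^*, h' > h\}|$ together with $1 - \varphi^{-1} = \varphi^{-2}$ show that the corresponding slack unit dominates $\varphi^{h^* - j_v(h^*)}$. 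The residual $\varphi^{c-|H|}$ is either at most $1$ (when $|H| \ge c$), or absorbed by $B$ via the Fibonacci identity $\varphi^c = \varphi^{c-1} + \varphi^{c-2}$: since every integer in $(x, c)$ is represented in some $L(u_i)$ and the two largest such ranks must land in $L(s_1) \cup L(s_2)$ by the $\zeta$-ordering, their contributions to $B$ provide precisely the needed $\varphi^{c-1}$ and $\varphi^{c-2}$ mass. The main obstacle will be establishing injectivity of this charging---ensuring no single piece of slack is claimed twice---which I expect to handle through a top-down sweep of $H$ combined with the geometric decay of $\varphi$-powers.
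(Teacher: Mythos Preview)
Your setup---splitting $L(v)=\{c\}\sqcup H$ with $H=\bigsqcup_u H_u$ and bounding the terms from $H_{s_1}\cup H_{s_2}$ by their counterparts in $\sigma(s_1)+\sigma(s_2)$---is correct, but the charging scheme you outline for $H_o$ and for the residual $\varphi^{c-|H|}$ does not work.

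For $H_o$: you match each $h^\ast\in H_o$ with the element of $H_{s_1}\cup H_{s_2}$ lying \emph{just below} it in $H$. Such an element need not exist. Take $v$ with three children whose rank lists are $\{3\},\{2\},\{1\}$ (each realizable by a complete binary tree): then $c=0$, $C(v)=\{u_1,u_2\}$, $H_{s_1}=\{3\}$, $H_{s_2}=\{2\}$, $H_o=\{1\}$, and nothing in $H_{s_1}\cup H_{s_2}$ lies below $h^\ast=1$. Even when the match exists, the ``slack unit'' you extract from shifting some $h<h^\ast$ is of order $\varphi^{h-j}$ and need not dominate $\varphi^{h^\ast-j_v(h^\ast)}$. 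Concretely, with rank lists $\{5\},\{4,1\},\{3\}$ (also realizable) one has $h^\ast=3$, the element below is $h=1\in H_{s_2}$, and your stated inequality $j_v(h)\le j_v(h^\ast)+|\{h'\in H_{s_i}:h<h'<h^\ast\}|$ reads $3\le 2+0$, which is false; the total slack attributable to $h^\ast$ shifting $h=1$ is at most $\varphi^{-2}$, far short of the $\varphi$ that $h^\ast$ contributes to $\sigma(v)$.

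For the residual: the claim that ``the two largest ranks in $(x,c)$ must land in $L(s_1)\cup L(s_2)$ by the $\zeta$-ordering'' is false. The $\zeta$-ordering controls $\max L(u_i)$, not which child owns $c-1$ or $c-2$. With children having rank lists $\{10\},\{9\},\{2\},\{1\},\{0\}$ one gets $c=3$, the ranks $2,1\in(x,c)$ sit in $u_3,u_4\notin C(v)$, hence $B=0$, yet $\varphi^{c-|H|}=\varphi>1$; it is covered by $A$, not $B$.

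The paper bypasses all per-element bookkeeping. It splits on whether $L(u_2)$ meets the prefix $P=L(v)\setminus\{c\}$. If so, let $d=\max L(u_2)$ with position $j\ge 1$ in $L(v)$; since every $l_k$ with $k<j$ exceeds $d\ge\max L(u_i)$ for $i\ge 2$, all of $l_0,\dots,l_{j-1}$ lie in $L(u_1)$, and for $i\ge j$ one uses only $l_i\le d-(i-j)$ to bound the whole tail by a geometric series:
\[
\sum_{i\ge j}\varphi^{l_i-i}\le \varphi^{d-j}\sum_{m\ge 0}\varphi^{-2m}=\varphi^{d-j+1}\le\varphi^d\le\sigma(u_2).
\]
If not (all of $P$ comes from $u_1$), a direct computation of $\sigma(v)-\sigma(u_1)$ via the telescoping $\varphi^{c}=\varphi^{c-1}+\varphi^{c-3}+\cdots+\varphi^{c-2k+1}+\varphi^{c-2k}$ gives $\sigma(v)-\sigma(u_1)\le\varphi^d$ for a suitable $d$ with $\varphi^d\le 1+\sigma(u_2)$. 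The key idea you are missing is that one should not charge $H_o$-elements individually; rather, everything in $L(v)$ from position $j$ downward is swallowed in one geometric sum by $\sigma(u_2)$.
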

\begin{claimproof}
Recall that by the definition $C(v)$ is a set of two children of $v$ in $T$ with the biggest values of
$\zeta$ or a set of all children of $v$ in case it has less than two of them.
Observe that having bigger value of $\zeta(v)$ is another way of expressing having the set $L(v)$ bigger lexicographically when sorted
in decreasing order.

If $v$ is a leaf then $C(v)$ is empty and $\sigma(v) = 1$, so the inequality is obvious.
Henceforth we focus on a vertex $v$ that is not a leaf.
In our proof following equation will come handy:

\[ \varphi = \sum_{i=0}^{\infty} \varphi^{-2i} \]
It holds since $\sum_{i=0}^{\infty} \varphi^{-2i} = \frac{1}{1 - \varphi^{-2}} = \frac{\varphi^2}{\varphi^2 - 1} = \frac{\varphi^2}{\varphi} = \varphi$.

Let us now analyze $L(v)$. It consists of some prefix $P$ of values that appeared exactly once in children of~$v$, then $\col(v)$ and then nothing (when enumerated from the biggest to the smallest). Let us now denote by $A_i$ intersection of $L(u_i)$ and $P$, where
$u_i$ is $i-$th child of $v$ when sorted in nonincreasing order by their values $\zeta(u_i)$ (1-based).
We distinguish two cases:

\paragraph*{Case 1: $A_2$ is nonempty.}
If $A_2$ is nonempty then in particular it means that $v$ has at least two children.
Let us denote the biggest element of $L(u_2)$ by $d$. We have that $d \in P$, but $d$ is not the biggest element of $P$.
Its contribution to $\sigma(u_2)$ is $\varphi^d$,
however its contribution to $\sigma(v)$ is at most $\varphi^{d-1}$ (because of the skew and since $d$ is not the biggest element of $P$).
Contribution to $\sigma(v)$ of elements smaller than $d$ can be bounded from above by $\varphi^{d-3} + \varphi^{d-5} + \ldots$.
We know that $d = l_j$ for some $j$, where $j \ge 1$ and $L(v)$ consists of elements $l_0 >l_1 > \ldots > l_{|L(v)-1|}$.
We have that $l_k \in L(u_1)$ for $k<j$ and that $l_j \ge l_i + (i-j)$~for~$i \ge j$, so $l_i - i \le l_j - j - 2(i-j)= d-j -2(i-j)$.

We can deduce that 
\begin{align*}
\sigma(v) &= \sum_{i=0}^{|L(v)| - 1} \varphi^{l_i - i} = \sum_{i=0}^{j - 1} \varphi^{l_i - i} + \sum_{i=j}^{|L(v)| - 1} \varphi^{l_i - i} \le
\sigma(u_1) + \sum_{i=j}^{|L(v)| - 1} \varphi^{d-j-2(i-j)} \\
&\le \sigma(u_1) + \varphi^{d-j} \sum_{i=j}^{\infty} \varphi^{-2(i-j)} =
\sigma(u_1) + \varphi^{d-j} \sum_{i=0}^{\infty} \varphi^{-2i} = \sigma(u_1) + \varphi^{d-j+1} \\
    &\le \sigma(u_1) + \varphi^d \le \sigma(u_1) + \sigma(u_2) <
1 + \sigma(u_1) + \sigma(u_2),
  \end{align*}
which is what we wanted to prove.

\paragraph*{Case 2: $A_2$ is empty.}
Let us now introduce a few variables:
\begin{itemize}
\item $d$ - the biggest integer number smaller than $\col(v)$ that is not an element of $L(u_1)$. \newline We know that elements from $d+1$ to $\col(v)-1$
belong to $L(u_1)$.
\item $k$ - shorthand for number of these elements (which is equal to $\col(v)-1-d$).
\newline $k$~can be zero, but cannot be negative.
\item $g$ - the number of elements of $L(v)$ that are bigger than $\col(v)$.
\end{itemize}
Then from the definition of $\col(v)$ either
\begin{itemize}
\item $d=-1$; or
\item $v$ has at least two children and $L(u_2)$ contains a number that is at least $d$.
\end{itemize}
Because of that we have $1 + \sum_{s \in C(v)} \sigma(s) \ge \sigma(u_1) + \varphi^d$.
We know that $\sum_{s \in C(v)}$ is either $\sigma(u_1)$ or $\sigma(u_1) + \sigma(u_2)$, depending on whether $v$ has only one child or more. If $d=-1$ then
$1 \ge \varphi^{d}$ and stated inequality holds. If $d \neq -1$ then $u_2$ exists
and $\sigma(u_2) \ge \varphi^d$.

Note that either $k>0$ or $g>0$, because if $k=g=0$ then $d=\col(v)-1$ and $L(u_1)$ cannot contain elements bigger then $\col(v)$ (because $g=0$),
cannot contain $\col(v)$ (from the definition of $\col(v)$) and cannot contain $\col(v)-1$ (since $d=\col(v)-1$), so its biggest element is at most $d-1$. If $d=-1$ then it means that $v$ is a leaf, but we already assumed it is not one. However, if $v$ has at least two children and $L(u_2)$ contains a number that is at least $d$, then it contradicts the assumption that $\zeta(u_1) \ge \zeta(u_2)$. So indeed it holds that $k>0$ or $g>0$ and therefore $k + g \ge 1$.

We have that \[ \sigma(v) - \sigma(u_1) \le \varphi^{\col(v)-g} - (\varphi^{\col(v)-g-1} + \varphi^{\col(v)-g-3} + \ldots + \varphi^{\col(v)-g-2k+1}), \]
which is because summands coming from numbers bigger than $\col(v)$ in $L(v)$ and $L(u_1)$ cancel out ($A_2$~is~empty, so all elements of $L(v)$ different than $\col(v)$ come from $L(u_1)$) and new rank $\col(v)$ contributes $\varphi^{\col(v)-g}$ to $\sigma(v)$
whereas $L(u_1)$ contains numbers from $d+1$ up to $\col(v)-1$ and their contribution to $\sigma(u_1)$ is $\varphi^{\col(v)-g-1} + \varphi^{\col(v)-g-3} + \ldots + \varphi^{\col(v)-g-2k+1}$.
\newline We conclude that $\sigma(v) - \sigma(u_1) \le \varphi^{-g} (\varphi^{\col(v)} - (\varphi^{\col(v)-1} + \varphi^{\col(v)-3} + \ldots + \varphi^{\col(v)-2k+1}))$.
\newline On the other hand since $\varphi^2 = \varphi + 1$ we have that
\[ \varphi^{\col(v)} = \varphi^{\col(v)-1}+\varphi^{\col(v)-2} = \varphi^{\col(v)-1}+\varphi^{\col(v)-3} + \varphi^{\col(v)-4} = \ldots = \]
\[ = (\varphi^{\col(v)-1}+\varphi^{\col(v)-3} + \ldots + \varphi^{\col(v)-2k+1}) 
+ \varphi^{\col(v)-2k}. \]
Because of that we have
\[ \sigma(v) - \sigma(u_1) \le \varphi^{-g} \cdot \varphi^{\col(v)-2k} = \varphi^{\col(v)-2k-g} = \varphi^{\col(v)-(\col(v)-1-d)-k-g} = \varphi^{d+1-(k+g)} \le \varphi^{d}. \] From that we conclude that $\sigma(v) \le \sigma(u_1) + \varphi^d \le 1 + \sum_{s \in C(v)} \sigma(s)$, what concludes proof of this claim.
\end{claimproof}

Now, having claims \ref{lem:potphi} and \ref{lem:pot3} proven, we can wrap our reasoning up. If $v$ is a leaf then $\sigma(v) = \widetilde{\zeta}(v)=1$.
If $v$ is not a leaf then we know that $\sigma(v) \le 1 + \sum_{s \in C(v)} \sigma(s)$ and $\widetilde{\zeta}(v) \ge 1 + \sum_{s \in C(v)} \widetilde{\zeta}(s)$,
so by straightforward induction we get that $\sigma(v) \le \widetilde{\zeta}(v)$ for every $v \in V(T)$, as desired by Claim~\ref{claim:ineq_pots}.

\section{Proof of Theorem~\ref{thm:main}}\label{sec:simple}
Theorem~\ref{thm:main} is a direct corollary of Lemma~\ref{lem:trees} and the following statement.

\begin{theorem}\label{thm:simple}
Let $G$ be a graph and $a,b$ be positive integers. 
If the treewidth of $G$ is less than $a$ and $G$ does not contain any tree
of treedepth more than $b$ as a subgraph, then the treedepth of $G$ is at most $ab$.
\end{theorem}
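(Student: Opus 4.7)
I would argue by induction on $b$ (with $a$ fixed), relying on the bramble machinery from \cref{sec:prelims} to extract, at each step, a small ``root set'' from a tree decomposition of $G$. The base case $b = 1$ is immediate: the hypothesis forbids any subtree of treedepth $2$, so $G$ is edgeless and $\td(G) \le 1 \le a$.

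For the inductive step with $b \ge 2$, I may assume $G$ is connected, since both treedepth and both hypotheses pass to connected components. Let $\mathcal{F}$ denote the family of subtrees of $G$ of treedepth exactly $b$. If $\mathcal{F}$ is empty, every subtree of $G$ has treedepth at most $b-1$, and the inductive hypothesis applied to $G$ itself already gives $\td(G) \le a(b-1) \le ab$.

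The crux is the opposite case, where the key claim is that $\mathcal{F}$ is a bramble in $G$. To verify this, suppose for contradiction that $H_1, H_2 \in \mathcal{F}$ are vertex-disjoint and let $P$ be a shortest path in $G$ from $V(H_1)$ to $V(H_2)$, which exists since $G$ is connected; by the shortness the internal vertices of $P$ avoid $V(H_1) \cup V(H_2)$, so $H := H_1 \cup P \cup H_2$ is a subtree of $G$. For any $r \in V(H)$, whichever of $H_1, H_2$ does not contain $r$ is wholly contained in one connected component of $H - r$, and that component therefore has treedepth at least $b$; hence $\td(H) \ge b + 1$, contradicting the hypothesis. Given the bramble $\mathcal{F}$, I take any tree decomposition $(T, \beta)$ of $G$ of width less than $a$ and apply \cref{lem:hit-bramble} to $\mathcal{F}$ to obtain $t \in V(T)$ such that $S := \beta(t)$, of size at most $a$, meets every member of $\mathcal{F}$. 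Then $G - S$ contains no subtree of treedepth at least $b$ (and still has treewidth less than $a$), so the inductive hypothesis yields $\td(G - S) \le a(b-1)$; stacking $S$ on top gives $\td(G) \le |S| + \td(G - S) \le ab$, as required.

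The main potential obstacle is the bramble claim, but once the combined subtree $H_1 \cup P \cup H_2$ is set up and one asks where the root of a hypothetical optimal decomposition of $H$ could sit, the bound $\td(H) \ge b+1$ drops out essentially for free; the remainder is a routine application of \cref{lem:hit-bramble} combined with the inductive hypothesis.
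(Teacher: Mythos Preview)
Your proposal is correct and follows essentially the same route as the paper's proof: induction on $b$, the observation that two vertex-disjoint subtrees of treedepth $b$ could be joined by a shortest path into a subtree of treedepth at least $b+1$, hence the family of treedepth-$b$ subtrees is a bramble, and \cref{lem:hit-bramble} yields a bag of size at most $a$ whose removal allows the inductive step. Your write-up is slightly more explicit in handling the empty-bramble case and in justifying $\td(H_1 \cup P \cup H_2) \ge b+1$, but the argument is the same.
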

\begin{proof}
We prove the fact by induction on $b$. 
For $b=1$, if $G$ contains no tree of treedepth $2$ as a subgraph, then $G$ is edgeless, and its treedepth is at most $1$, as desired.

Assume now $b>1$ and that the statement is true for all $b' < b$.
Without loss of generality assume that $G$ is connected, as otherwise we prove the treedepth bound for each connected
component of $G$ separatedly. 

Let $\mathcal{B}$ be the family of all subgraphs of $G$ that are trees of treedepth exactly $b$.
The crucial observation is the following.
\begin{claim}\label{cl:treesmeet}
For every two $B_1,B_2 \in \mathcal{B}$, $V(B_1) \cap V(B_2) \neq \emptyset$.
\end{claim}
\begin{proof}
Assume the contrary and let $B_1,B_2 \in \mathcal{B}$ be two offending trees in $\mathcal{B}$.
Since $b > 1$, $B_1$ and $B_2$ are nonempty. Let $P$ be a shortest path from $V(B_1)$ and $V(B_2)$; it exists as $G$ is connected.
Define a subgraph $B$ of $G$ being the union of $B_1$, $P$, and $B_2$. 
Since $P$ is a shortest path from $B_1$ to $B_2$, $B$ is a tree.
However, since the treedepth of $B_1$ and $B_2$ equals $b$, the treedepth of $B$ is more than $b$, a contradiction.
\cqed\end{proof}
Claim~\ref{cl:treesmeet} implies that $\mathcal{B}$ is a bramble in $G$.

Let $(T,\beta)$ be a tree decomposition of $G$ of minimum width.
By Lemma~\ref{lem:hit-bramble}, there exists a bag $\beta(t)$ for some $t \in V(T)$ that intersects $V(B)$ for every $B \in \mathcal{B}$.
By the definition of $\mathcal{B}$, every tree that is a subgraph of $G' := G-\beta(t)$ has treedepth less than $b$. 
By the inductive hypothesis, the treedepth of $G'$ is at most $a \cdot (b-1)$. 
Thus, the treedepth of $G$ is at most $\treedepth(G') + |\beta(t)| \leq a \cdot (b-1) + a = ab$, as desired.
\end{proof}

\section{Proof of  Theorem~\ref{thm:apx}}\label{sec:apx}\label{sec:kr}

We consider a greedy tree decomposition of a connected graph $G$, as defined in~\cite{KR18}.
A greedy tree decomposition is a tree decomposition that can be also interpreted as a treedepth decomposition.
More formally, a tree decomposition $(T, \beta)$ of a graph $G$ is \emph{greedy}
if
\begin{enumerate}
  \item $V(T) = V(G)$,
  \item for every $uv \in E(G)$, the nodes $u$ and $v$ in $T$ are in ancestor-descendant relation in $T$, and
  \item for every vertex $u \in V(T)$ and its child $v$ there is some descendant $w$ of $v$ in $T$
  such that $uw \in E(G)$.
\end{enumerate}

We now prove the following lemma that combines Lemma~\ref{lem:trees} with the machinery
of Kawarabayashi and Rossmann~\cite{KR18}.
\begin{lemma}\label{lem:kr-main}
Let $G$ be a connected graph, $(T,\beta)$ be a greedy tree decomposition of $G$, and let $\tau \geq 2$ be such that $|\beta(t)| \leq \tau$ for every $t \in V(T)$.
Then $G$ contains a subcubic tree of treedepth $\Omega(\td(T) / \log \tau)$.
\end{lemma}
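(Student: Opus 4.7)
The plan is to extract from the greedy tree decomposition $(T,\beta)$ a (not necessarily subcubic) tree subgraph $T^\star$ of $G$ with $\td(T^\star) = \Omega(\td(T)/\log\tau)$, and then apply Lemma~\ref{lem:trees} to it to obtain the desired subcubic tree. The extraction step is precisely where the machinery of Kawarabayashi and Rossman~\cite{KR18} enters: one uses the greedy property of $(T,\beta)$, which guarantees that for every parent--child edge $uv$ of $T$ there is an edge of $G$ from $u$ to a descendant of $v$ in $T$, to inductively stitch together a rooted tree subgraph of $G$ whose tree structure tracks that of $T$. At each vertex $u$ of $T$ the bag $\beta(u)$ offers at most $\tau$ possible ``anchor'' vertices through which the child subtrees of $u$ can attach upward, and a recursive pigeonhole argument on these anchors preserves a $\Theta(1/\log\tau)$ fraction of the height of $T$ inside the extracted $T^\star$.

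Once $T^\star \subseteq G$ is in hand, Lemma~\ref{lem:trees} produces a subcubic subtree $S$ of $T^\star$ with
\[ \td(S) \;\geq\; \log_3(\varphi) \cdot \td(T^\star) \;=\; \Omega\bigl(\td(T)/\log\tau\bigr). \]
Since the subgraph relation is transitive, $S$ is a subcubic tree subgraph of $G$ of the required treedepth, which proves the lemma.

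The main obstacle is the first step. One must verify that the recursive extraction genuinely outputs a \emph{tree} subgraph of $G$, and not merely some subgraph of bounded treedepth, because only then does Lemma~\ref{lem:trees} apply. This requires carefully aligning each recursive choice of an anchor vertex in a bag with a concrete edge of $G$ provided by the greedy property, and coherently rooting the emerging tree so that its parent--child relation corresponds to genuine $E(G)$-edges. A secondary technical challenge is the accounting: the pigeonhole choices across the bag stratification must cost only a $\log\tau$ factor in total rather than a factor of $\tau$ (which would be catastrophic). The standard way to achieve this is to group the children of a node of $T$ according to which vertex of $\beta(u)$ witnesses their upward attachment, then recurse only on the group whose maximum residual height is largest; since there are at most $\tau$ groups, one can afford $\log\tau$ such halvings before the remaining height shrinks, which accounts precisely for the denominator in the claimed bound.
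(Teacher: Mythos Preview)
Your plan diverges from the paper's proof in a way that leaves a real gap. The paper does \emph{not} extract $T^\star\subseteq G$ directly and then apply Lemma~\ref{lem:trees} once at the end. Instead it applies Lemma~\ref{lem:trees} \emph{twice}: first to the decomposition tree $T$ itself, obtaining a subcubic subtree $S\subseteq T$ with $\td(S)\ge\log_3(\varphi)\cdot\td(T)$; then it invokes the Kawarabayashi--Rossman construction (Lemma~\ref{lem:auxiliary}), which for a \emph{subcubic} $S\subseteq T$ produces a tree $F\subseteq G$ with $V(S)\subseteq V(F)$ and maximum degree at most $\tau+2$. The $\log\tau$ loss then comes from a vertex-counting argument (Lemma~\ref{lem:maxdeg}): $S$ being subcubic with large treedepth forces $|V(S)|\ge 2^{\Omega(\td(T))}$, hence $|V(F)|\ge 2^{\Omega(\td(T))}$, and a tree on that many vertices with degree $\le\tau+2$ has treedepth $\Omega(\td(T)/\log\tau)$. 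A second application of Lemma~\ref{lem:trees} to $F$ yields the subcubic output.

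Your proposal skips the first application of Lemma~\ref{lem:trees}, and this is where it breaks. The KR18 lifting needs the input subtree of $T$ to be subcubic to guarantee bounded degree of $F$; without that, you have no handle on $\td(F)$. You try to compensate with a pigeonhole on ``anchor'' vertices in $\beta(u)$, but the argument as written preserves only a $1/\log\tau$ fraction of the \emph{height} of $T$, not of its \emph{treedepth}, and these can differ by an exponential factor (a path has height $n$ and treedepth $\Theta(\log n)$). Moreover, recursing on a single ``largest residual height'' group at each step yields essentially a single root-to-leaf path in $T$, whose image in $G$ has no reason to have treedepth $\Omega(\td(T)/\log\tau)$. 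The paper's route through vertex counts and degree bounds is exactly what converts ``large treedepth in $T$'' into ``large treedepth in $G$'' with only a $\log\tau$ loss; your pigeonhole sketch does not achieve this.
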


To this end, we first apply Lemma~\ref{lem:trees} to tree $T$ and obtain a subcubic tree $S$ such that
\begin{equation}\label{eq:treeS}
\td(S) \geq \td(T) \cdot \log_3(\varphi).
\end{equation}

Second, we apply the core part of the reasoning of Kawarabayashi and Rossman~\cite{KR18}.
The construction of Section~5 of~\cite{KR18} can be encapsulated in the following lemma.

\begin{lemma}[Section~5 of~\cite{KR18}]\label{lem:auxiliary}
Let $(T,\beta)$ be a greedy tree decomposition of graph $G$ and let $\tau = \max_{t \in V(T)} |\beta(t)|$.
Then for every subcubic subtree $S$ of $T$ there exists a subtree $F$ of $G$
such that $V(S) \subseteq V(F)$ and the maximum degree of $F$ is bounded by $\tau + 2$.
\end{lemma}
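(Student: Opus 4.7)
The plan is to lift $S$ from $T$ to $G$ by realizing each edge of $S$ as a short path in $G$ routed through the bags of the greedy tree decomposition, then take a spanning tree of the resulting union. Since $V(T) = V(G)$, the vertex set $V(S)$ already lies inside $V(G)$; only the edge realization is missing, and the bags supply the connecting vertices.

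More concretely, I would fix the common root of $T$ (which also roots $S$) and, for each non-root $v \in V(S)$ with $T$-parent $u$, construct a path $P_v$ from $v$ to $u$ in $G$ whose vertices all lie in $\beta(v) \cup \{u\}$. Existence of such a path is where the greedy property is invoked: it guarantees a $G$-neighbor of $u$ inside the subtree $T_v$, which combined with the tree-decomposition axioms (the sets $\{t : x \in \beta(t)\}$ are connected subtrees of $T$ and every $G$-edge is covered by some bag) and the ancestor--descendant property forces $u \in \beta(v)$ and provides enough connectivity inside $G[\beta(v) \cup \{u\}]$ to produce $P_v$ of length at most $\tau$. I then set $F_0 := \bigcup_v P_v$, which is a connected subgraph of $G$ containing $V(S)$, and take $F$ to be any spanning tree of $F_0$.

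The main obstacle, and the technical core of the statement, is the degree bound. For a vertex $x \in V(F)$, let $t_x \in V(T)$ be the node closest to the root of $T$ with $x \in \beta(t_x)$; then $x$ appears on a path $P_v$ only when $v$ is a descendant of $t_x$ in $T$ with $x \in \beta(v)$, or when $x$ plays the role of the parent endpoint $u$ of such a path. I would partition the $F$-edges incident to $x$ according to whether the other endpoint lies in $\beta(t_x)$: edges of the first type number at most $|\beta(t_x)| - 1 \le \tau - 1$, while edges of the second type comprise either the at most three $S$-edges at $x$ (using that $S$ is subcubic, in case $x \in V(S)$) or the unique path that exits $\beta(t_x)$ ``upward'' through $x$ along the $T$-ancestor chain. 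Summing these contributions gives the target bound $(\tau - 1) + 3 = \tau + 2$, and since passing to a spanning tree does not increase degrees, this bound transfers from $F_0$ to $F$.
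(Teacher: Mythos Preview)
The paper does not prove this lemma; it is quoted from Section~5 of Kawarabayashi--Rossman and used as a black box, so there is no in-paper argument to compare against. Your outline does match the spirit of that construction (lift $S$ from $T$ to $G$ by realising each $S$-edge through the bag structure, then control degrees), but as written it has two genuine gaps.

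First, the existence of the path $P_v$ inside $G[\beta(v)\cup\{u\}]$ is asserted, not proved. In a greedy tree decomposition the minimal bag at $v$ consists of $v$ together with those proper ancestors of $v$ that have a $G$-neighbour somewhere in $T_v$; such an ancestor may be $G$-adjacent only to vertices strictly below $v$ and to no other member of $\beta(v)$, so $G[\beta(v)]$ need not be connected. The Kawarabayashi--Rossman construction does not route inside a single bag; it uses the witnesses $w\in T_v$ with $uw\in E(G)$ guaranteed by greediness and builds $F$ by a more global induction.

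Second, and more seriously, the degree count does not go through. You want at most three $F_0$-neighbours of $x$ lying outside $\beta(t_x)$ (i.e.\ proper $T$-descendants of $x$), and you invoke ``the at most three $S$-edges at $x$''. But $x$ can lie on $P_v$ for \emph{every} $v\in V(S)$ that is a proper descendant of $x$ with $x\in\beta(v)$, not only for the $S$-neighbours of $x$; each such path may contribute a distinct descendant-neighbour of $x$ in $F_0$. Subcubicness of $S$ bounds the $S$-degree of $x$, not the number of paths $P_v$ through $x$, and taking an arbitrary spanning tree of $F_0$ afterwards gives no control on the maximum degree. In the cited proof the $\tau+2$ bound comes from a carefully staged construction in which the contributions at each vertex are controlled step by step, not from a post-hoc count on a union of arbitrary bag-paths.
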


By application of Lemma~\ref{lem:auxiliary} to our decomposition $(T,\beta)$ and subtree $S$ we get a tree $F$ in $G$, which has large treedepth, as we show in a moment.
To this end, we need the following simple bound on treedepth of trees.
\begin{lemma}\label{lem:maxdeg}
For every tree $H$ with maximum degree bounded by $d \geq 2$ it holds that
\[ \log_d |V(H)| \leq \td(T) \leq 1+\log_2 |V(H)|. \]
\end{lemma}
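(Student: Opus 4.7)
The plan is to prove the two inequalities separately, since they are essentially independent: the upper bound does not use the maximum-degree assumption at all, whereas the lower bound is where the degree constraint plays its role.

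For the lower bound $\log_d |V(H)| \leq \td(H)$, I would show by induction on $k$ that every tree $H$ with maximum degree at most $d$ and treedepth at most $k$ satisfies $|V(H)| \leq \frac{d^k - 1}{d - 1}$, which is strictly less than $d^k$ for $d \geq 2$. The base case $k = 1$ is immediate, as such a tree is a single vertex. For the inductive step I use the standard identity $\td(H) = 1 + \min_v \td(H - v)$ for connected graphs: picking a vertex $v$ attaining the minimum, the forest $H - v$ consists of exactly $\deg_H(v) \leq d$ subtrees (because $H$ is a tree), each of treedepth at most $k - 1$ and maximum degree at most $d$. The inductive hypothesis then gives
\[ |V(H)| \leq 1 + d \cdot \frac{d^{k-1} - 1}{d - 1} = \frac{d^k - 1}{d - 1} < d^k, \]
and taking base-$d$ logarithms yields $\log_d |V(H)| \leq k = \td(H)$.

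For the upper bound $\td(H) \leq 1 + \log_2 |V(H)|$, I would induct on $|V(H)|$, using the classical centroid theorem for trees: every tree has a vertex $c$ whose removal splits it into subtrees each of size at most $|V(H)|/2$. Combining $\td(H) \leq 1 + \td(H - c) = 1 + \max_C \td(C)$, where $C$ ranges over the components of $H - c$, with the inductive hypothesis applied to each such component, we obtain $\td(H) \leq 1 + (1 + \log_2(|V(H)|/2)) = 1 + \log_2 |V(H)|$. The base case $|V(H)| = 1$ is immediate.

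The only delicate point, and the closest the argument has to an obstacle, is that the naive induction with hypothesis $|V(H)| \leq d^k$ fails: one obtains $|V(H)| \leq 1 + d \cdot d^{k-1} = 1 + d^k$, which is too weak by one. Strengthening the hypothesis to the geometric-series bound $\frac{d^k - 1}{d - 1}$ absorbs the extra ``$+1$'' coming from the removed vertex and closes the induction cleanly; everything else is routine.
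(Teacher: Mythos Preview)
Your proposal is correct and follows essentially the same approach as the paper: both use the recursive identity $\td(H) = 1 + \min_v \td(H-v)$ together with the fact that deleting a vertex from a degree-$d$ tree yields at most $d$ components for the lower bound, and the centroid argument for the upper bound. The only cosmetic difference is that the paper strengthens the inductive hypothesis to $|V(H)| \leq d^k - 1$ rather than your sharper $\frac{d^k-1}{d-1}$; either choice absorbs the ``$+1$'' and closes the induction.
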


\begin{proof}
We use the following equivalent recursive definition of treedepth: Treedepth of an empty graph is $0$, treedepth of a disconnected
graph equals the maximum of treedepth over its connected components, while for nonempty connected graphs $G$ we have $\td(G) = 1 + \min_{v \in V(G)} \td(G-v)$. 

For the lower bound, for $k \geq 1$ let $f_d(k)$ be the maximum possible number of vertices of a tree of maximum degree at most $d$
and treedepth at most $k$. Clearly, $f_d(1) = 1$. Since removing a single vertex from a tree of maximum degree at most $d$ results in
at most $d$ connected components, we have that
\[ f_d(k+1) \leq 1 + d \cdot f_d(k). \]
Consequently, we obtain by induction that
\[ f_d(k) \leq d^k - 1. \]
This proves the lower bound. For the upper bound, note that in every tree $T$ there exists a vertex $v \in V(T)$ such that every connected component of $T-\{v\}$ has at most $|V(T)|/2$ vertices.
Consequently, if we define $g(n)$ to be the maximum possible treedepth of an $n$-vertex tree, then $g(1) = 1$ and we have that
\[ g(n) \leq 1 + g(\lfloor n/2 \rfloor).\]
This proves the upper bound.
\end{proof}

By~\eqref{eq:treeS} and Lemma~\ref{lem:maxdeg} we get that $|V(S)| \geq 2^{\td(T) \cdot \log_3(\varphi)-1}$. This implies that also
\begin{equation}\label{eq:treeF}
|V(F)| \geq 2^{\td(T) \cdot \log_3(\varphi)-1}.
\end{equation}

As $S$ is subcubic, by Lemma~\ref{lem:auxiliary} we know that the maximum degree of $F$ is bounded by $\tau+2$.
Therefore Lemma~\ref{lem:maxdeg} and~\eqref{eq:treeF} jointly imply that
\begin{equation}\label{eq:twF}
\td(F) \geq \log_{(\tau + 2)} 2^{\td(T) \cdot \log_3(\varphi) - 1} \geq \frac{\td(T) \cdot \log_3(\varphi) - 1}{\log(\tau + 2)} = \Omega(\td(T) / \log \tau).
\end{equation}
Here, the last inequality follows from the assumption $\tau \geq 2$.

As tree $F$ is not necessarily subcubic, we apply one more time Lemma~\ref{lem:trees}
and get a subcubic subtree $H$ of $F$ such that
\begin{equation}
\td(H) \geq \td(F) \cdot \log_3(\varphi) = \Omega(\td(T) / \log \tau).
\end{equation}
which finishes the proof of Lemma~\ref{lem:kr-main}.

\medskip

With Lemma~\ref{lem:kr-main} in hand, we are ready to conlude the proof of Theorem~\ref{thm:apx}.
\begin{proof}[Proof of Theorem~\ref{thm:apx}.]
Without loss of generality we can assume that the input graph $G$ is connected.
As in the proof of Lemma~\ref{lem:td-cheap-apx}, we
apply the polynomial-time approximation algorithm for treewidth~\cite{FeigeHL08}, to compute
a tree decomposition $(T_0,\beta_0)$ of $G$ with $\Oh(n)$ nodes of $T_0$ and $|\beta(t)| \leq \tau$ for every $t \in V(T_0)$
and some $\tau = \Oh(\treewidth(G) \sqrt{\log \treewidth(G)})$.
As discussed in~\cite{KR18}, one can in polynomial time turn $(T_0,\beta_0)$ into a greedy tree decomposition
$(T,\beta)$ of $G$ without increasing the maximum size of a bag, that is, still $|\beta(t)| \leq \tau$
for every $t \in V(T)$.
We apply Lemma~\ref{lem:tw2td} to $(T,\beta)$, returning a treedepth decomposition of $G$ 
of width at most $\tau \cdot \td(T) = \Oh(\td(T) \tw(G) \sqrt{\log \tw(G)})$.

It remains to bound $\td(T)$. Lemma~\ref{lem:kr-main} asserts that $G$ contains a subcubic tree $H$
of treedepth $\Omega(\td(T) / \log \tau)$. 
Therefore $\td(T) = \Oh(\td(H) \log \tau) = \Oh(\td(G) \log \tw(G))$
and thus the width of the computed treedepth decomposition
is $\Oh(\td(G) \tw(G) \log^{3/2} \tw(G))$. This finishes the proof of Theorem~\ref{thm:apx}.
\end{proof}

\section{Proof of Theorem~\ref{thm:lincol}}\label{sec:lincol}
Here we show how to assemble the proof of Theorem~\ref{thm:lincol}
from Theorem~\ref{thm:main}, a number of intermediate results of~\cite{LinCol},
and an improved excluded grid theorem due to Chuzhoy and Tan~\cite{ChuzhoyT19}:
\begin{theorem}[\cite{ChuzhoyT19}]\label{thm:gmt}
There exists a polynomial $p_\mathrm{GMT}$ such that for every integer $k$
if a graph $G$ has treewidth at least $k^9 p_\mathrm{GMT}(\log k)$ then $G$ contains a $k \times k$ grid as a minor.
\end{theorem}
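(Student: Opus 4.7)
The plan is to follow the \emph{path-of-sets system} paradigm established by Chekuri and Chuzhoy for polynomial excluded grid theorems, and sharpened in subsequent work by Chuzhoy and Tan. The argument proceeds in three phases: first, extract from the treewidth hypothesis a highly well-linked vertex set; second, iteratively build a path-of-sets system of the right width and length; third, convert the path-of-sets system into a $k \times k$ grid minor.

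For the first phase I would invoke the classical fact that a graph of treewidth $w$ contains a node-well-linked set $A$ of size $\Theta(w)$, meaning that for every partition $A = A_1 \cup A_2$ no vertex cut of size less than $\min(|A_1|,|A_2|)$ separates $A_1$ from $A_2$ in $G$. Taking $w = k^9 p_{\mathrm{GMT}}(\log k)$ gives a routing-rich substrate to operate on. The second phase, the bulk of the proof, iteratively ``peels off'' $k$ clusters $C_1,\ldots,C_k$: at each step one isolates a subgraph together with two interface sets of size $\Theta(k)$ that are internally well-linked inside the cluster and externally well-linked to a residual portion of $A$, so that interfaces of consecutive clusters can later be chained by disjoint paths. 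The Chuzhoy--Tan improvement performs this peeling via a careful interplay of sparsest-cut approximations and multicommodity flow rounding that loses only a polylogarithmic factor per step, avoiding the multiplicatively compounding losses of earlier arguments.

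In the third phase I would realize the grid from the path-of-sets system: route $k$ vertex-disjoint horizontal paths sequentially through $C_1,\ldots,C_k$ using the external well-linkedness of the interfaces, then use the internal well-linkedness of each $C_i$ to permute these paths and create the crossings that correspond to a column of the grid. Standard crossbar/tree-of-sets lemmas convert this into a topological minor of $K_{k,k}$ of appropriate bipartition, from which a $k\times k$ grid minor is obtained by a routine routing argument.

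The main obstacle is the second phase. Achieving the exponent $9$ simultaneously constrains how much well-linkedness one may spend in constructing each cluster, how the width and length of the system scale against the initial $|A|$, and how per-step polylogarithmic losses accumulate over $k$ iterations. Tight control over the sparsest-cut based cluster decomposition, combined with a boosted (``direct-sum'' style) reinforcement of well-linkedness between interfaces, is what drives the bound down from the earlier $k^{19}$-type exponents to the target $k^{9}\cdot\mathrm{polylog}(k)$ regime.
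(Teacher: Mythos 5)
This statement is not proven in the paper at all; it is imported as a black-box citation of Chuzhoy and Tan~\cite{ChuzhoyT19}. The paper uses it only as an ingredient (together with Theorem~\ref{thm:main} and two lemmas from~\cite{LinCol}) in the short derivation of Theorem~\ref{thm:lincol}, and never attempts to reprove the excluded grid theorem itself. So there is no ``paper's own proof'' to compare your attempt against.

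As for your sketch on its own terms: it is a reasonable high-level recollection of the Chekuri--Chuzhoy path-of-sets framework and of the fact that Chuzhoy--Tan tightened the per-step losses, but it is far from a proof. The sketch leaves all quantitative work implicit --- in particular, you never say how the width and length of the path-of-sets system are chosen relative to $k$, how the well-linkedness parameter degrades at each of the $\Theta(k)$ peeling steps, or why the accumulated losses land at exponent $9$ rather than some other constant. The third phase (path-of-sets system to grid minor) also silently invokes nontrivial results (the Chekuri--Chuzhoy crossbar construction and the $\Omega(k^2)$-width requirement) without accounting for the width they consume. None of this is a ``gap'' in the paper, since the paper deliberately treats Theorem~\ref{thm:gmt} as external; but if your goal was to actually reprove~\cite{ChuzhoyT19}, what you have is an outline of the strategy, not an argument. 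If instead your goal was to understand the paper's usage, the right answer is simply: cite~\cite{ChuzhoyT19} and move on, which is exactly what the authors do.
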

The following two results were proven in~\cite{LinCol}.
\begin{lemma}[\cite{LinCol}]\label{lem:grid-lincol}
If a graph $G$ contains a $k \times k$ grid as a minor, then every linear coloring of $G$
requires $\Omega(\sqrt{k})$ colors.
\end{lemma}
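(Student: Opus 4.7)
The plan is to reduce from the hypothesis of a $k \times k$ grid minor in $G$ to the case where $G$ itself is the $k \times k$ grid, and then to prove the lower bound directly on the grid. The reduction uses the fact that linear colorings restrict to subgraphs: if $\alpha$ is a linear coloring of $G$ and $H \subseteq G$ is a subgraph, then $\alpha|_{V(H)}$ remains a linear coloring of $H$, since every path in $H$ is a path in $G$. Combined with the standard observation that a $k \times k$ grid minor (in particular, one of maximum degree $4$) lifts to a subdivision of an $\Omega(k) \times \Omega(k)$ grid as a subgraph, and the fact that suppressing degree-$2$ vertices of a subdivided grid only lengthens paths (so a linear coloring of the subdivided grid yields one of the underlying grid with no more colors), the task reduces to proving the lower bound for $G = G_k$.

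\textbf{The main bound.} Fix a linear coloring $\alpha$ of $G_k$ with $c$ colors. A first attempt is the Hamiltonian-path argument: any linear coloring of a path on $\ell$ vertices requires at least $\log_2(\ell+1)$ colors (the uniquely-colored vertex splits the path recursively), and applying this to a Hamiltonian path of $G_k$ gives only $c \geq \log_2 k^2 = \Omega(\log k)$, which is far too weak. To upgrade to $\Omega(\sqrt{k})$ one must exploit the two-dimensional connectivity of the grid. My plan is to iteratively extract uniquely-colored vertices while maintaining a shrinking sequence of axis-aligned subgrids: at step $i$, pick a uniquely-colored vertex $v_i$ on a long path living in the current subgrid $H_i$, record the color $\alpha(v_i)$, and remove a small cross (say, the row and column through $v_i$ relative to $H_i$) to obtain $H_{i+1}$. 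The central claim to establish is that the colors $\alpha(v_1), \alpha(v_2), \ldots$ are pairwise distinct: if two centers $v_i, v_j$ shared a color, one should be able to use the grid's Menger-type connectivity to route a path in $G_k$ through both of them while avoiding every other vertex of that color, violating the linear-coloring property.

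\textbf{Main obstacle.} The chief difficulty is exactly the distinctness-of-colors claim, together with the quantitative trade-off between the connectivity budget used to build violating paths and the damage inflicted by the iterative deletions. Since the bound $\Omega(\sqrt{k})$ is known to be tight for grids, the argument cannot prove more, which means the distinctness step must genuinely stop working after $\Omega(\sqrt{k})$ iterations. I expect this to manifest as the violating path requiring $\Omega(\sqrt{k})$ ``budget'' to route around obstacles created by prior deletions, so that after $\Omega(\sqrt{k})$ iterations the subgrid is too depleted for the routing to succeed. Rigorously constructing the violating path in an $n \times n$ subgrid with the allowed vertices of the reused color restricted to two prescribed locations, and matching this routing cost against the per-step deletion cost, is the key technical hurdle; it is where a careful application of Menger's theorem on grids, or an averaging argument across many parallel routes, is likely needed.
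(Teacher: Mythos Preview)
This lemma is not proved in the present paper; it is quoted from \cite{LinCol} and invoked as a black box in Section~\ref{sec:lincol}. There is therefore no proof here against which to compare your proposal.

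That said, your reduction step contains a genuine gap. You assert that a linear coloring of a subdivided grid yields a linear coloring of the underlying grid with no more colors, because ``suppressing degree-$2$ vertices only lengthens paths.'' The direction is right---a path $P$ in the grid lifts to a longer path $P'$ in the subdivision---but the conclusion does not follow: the vertex of unique color guaranteed on $P'$ may be a subdivision vertex rather than a branch vertex, so restricting the coloring to branch vertices need not give a linear coloring of the grid. What you genuinely get for free is that a linear coloring of $G$ restricts to a linear coloring of any \emph{subgraph} of $G$; hence you may pass to the subdivided wall (or grid) sitting inside $G$, but you must then prove the $\Omega(\sqrt{k})$ lower bound for \emph{arbitrary subdivisions} of the $k\times k$ structure, not just for the bare grid. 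Your iterative subgrid-extraction plan does not obviously survive that generalization, since subdivision vertices lie on every path you build and are eligible to carry the unique color. If you want to carry your approach through, you should either work directly with the minor model inside $G$ (routing the violating paths through branch sets rather than through an abstract grid), or reformulate the iterative argument so that it applies to subdivided walls.
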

\begin{lemma}[\cite{LinCol}]\label{lem:tree-lincol}
If $G$ is a tree of treedepth $d$ and maximum degree $\Delta$, then every linear
coloring of $G$ requires at least $d / \log_2(\Delta)$ colors.
\end{lemma}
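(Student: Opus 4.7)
The plan is to combine Theorem~\ref{thm:main} with the excluded-grid theorem of Chuzhoy and Tan (Theorem~\ref{thm:gmt}) and the two linear-coloring lower bounds (Lemmas~\ref{lem:grid-lincol} and~\ref{lem:tree-lincol}) through a direct case analysis. Intuitively, Theorem~\ref{thm:main} gives a dichotomy: for a graph $G$ of sufficiently large treedepth, either its treewidth is large — in which case Theorem~\ref{thm:gmt} yields a large grid minor and Lemma~\ref{lem:grid-lincol} then forces many colors — or $G$ contains a subcubic subtree of large treedepth, in which case Lemma~\ref{lem:tree-lincol} (with $\Delta=3$) forces many colors because the denominator $\log_2 \Delta$ is an absolute constant. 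The proof is then a matter of choosing the two parameters of Theorem~\ref{thm:main} so that both branches of the dichotomy produce the same lower bound of $k$ colors.

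Concretely, I would set $b := \lceil k \log_2 3 \rceil$, so that any subcubic subtree of treedepth $\geq b$ requires at least $k$ colors in any linear coloring via Lemma~\ref{lem:tree-lincol}. I would then pick $a$ just large enough that $\tw(G) \geq a$ forces, via Theorem~\ref{thm:gmt} applied with parameter $(C'k)^2$ for a suitable constant $C'$, a $(C'k)^2 \times (C'k)^2$ grid minor of $G$; this can be arranged with $a = \Theta(k^{18} p_\mathrm{GMT}(\log k))$, and by Lemma~\ref{lem:grid-lincol} such a grid minor forces $\Omega(C'k) \geq k$ colors in any linear coloring, provided $C'$ is chosen large enough. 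With these choices the quantity $Cab$ appearing in Theorem~\ref{thm:main} is $\Theta(k^{19} q(\log k))$ for some polynomial $q$ independent of $k$, and defining $p$ to be an upper bound on $Cq$ yields the required hypothesis $\td(G) \geq k^{19} p(\log k)$. Applying Theorem~\ref{thm:main} with these $a$ and $b$ gives one of the two scenarios above, and in both scenarios any linear coloring of $G$ uses at least $k$ colors, as required.

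The main obstacle is not conceptual but arithmetic bookkeeping: one must fix the constants $C'$ and $C$ and the polynomial $p$ so that the two ``$\geq k$'' lower bounds come out simultaneously for \emph{all} positive integers $k$. The delicate point is the grid branch, where Lemma~\ref{lem:grid-lincol} only gives $\Omega(\sqrt{N})$ colors from an $N \times N$ grid minor, so one must invoke Theorem~\ref{thm:gmt} with parameter quadratic in $k$; this squares the base $k$ inside $p_\mathrm{GMT}$ and produces the exponent $18 = 2 \cdot 9$ on the treewidth side, which combined with the linear factor $b = \Theta(k)$ on the tree side yields exactly $k^{19}$. Ensuring that the resulting polylog contributions from $p_\mathrm{GMT}(\log k^2)$ collapse into a single polynomial $p(\log k)$ (rather than leaking an extra $\log k$ into the exponent) is the only subtlety.
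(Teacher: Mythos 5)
Your proposal proves the wrong statement. The target was Lemma~\ref{lem:tree-lincol}: for a tree $G$ of treedepth $d$ and maximum degree $\Delta$, every linear coloring of $G$ uses at least $d/\log_2\Delta$ colors. This is a self-contained combinatorial fact about a single tree and a single coloring, stated in the paper as a result imported from~\cite{LinCol}; the paper gives it no proof. What you have sketched is instead the proof of Theorem~\ref{thm:lincol}, the final assembly step of Section~\ref{sec:lincol}, which combines Theorem~\ref{thm:main}, the excluded-grid theorem (Theorem~\ref{thm:gmt}), Lemma~\ref{lem:grid-lincol}, and, crucially, Lemma~\ref{lem:tree-lincol} itself. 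Because your argument invokes Lemma~\ref{lem:tree-lincol} in the subcubic-tree branch (with $\Delta=3$), it is circular as a proof of that lemma and cannot be repaired by any choice of constants: the lemma is a premise of your argument, not a consequence of it.

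A genuine proof of Lemma~\ref{lem:tree-lincol} must make no mention of treewidth, grid minors, or Theorem~\ref{thm:main}. It is a direct argument on the tree $G$ itself: roughly, from a linear coloring with $c$ colors one builds an elimination forest of $G$ of height $\Oh(c\log_2\Delta)$, or one bounds $|V(G)|$ in terms of $c$ and $\Delta$ and then applies a size-to-treedepth bound for bounded-degree trees in the spirit of Lemma~\ref{lem:maxdeg}; see~\cite{LinCol} for the argument the paper relies on. As an aside, what you did prove --- Theorem~\ref{thm:lincol} --- is essentially the paper's intended argument, and your version is cleaner in one spot: the paper's choice ``$a = \theta(k^2)$'' is evidently shorthand (or a slip) for the treewidth threshold $a = \Theta(k^{18}\, p_\mathrm{GMT}(\log k))$ that guarantees a $\theta(k^2)\times\theta(k^2)$ grid minor, which is the value you correctly identify. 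But none of this addresses the lemma you were asked to prove.
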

Recall that Theorem~\ref{thm:main} asserts that there exists a constant $C$ such that for every graph $G$
and integers $a,b \geq 2$, if the treedepth of $G$ is at least $Cab \log a$, then
either the treewidth of $G$ is at least $a$ or $G$ contains a subcubic tree of treedepth at least $b$.
Applying this theorem to $a = \theta(k^2)$ and $b = k \log_2(3)$, one obtains that 
if the treedepth of $G$ is $\Omega(k^{19} p_\mathrm{GMT}(\log k) \log k)$, then 
$G$ contains either a $\theta(k^2) \times \theta(k^2)$ grid minor or a subcubic tree
of treedepth at least $k \log_2(3)$. In the first outcome, Lemma~\ref{lem:grid-lincol}
gives the desired number of colors of a linear coloring, while in the second outcome
the same result is obtained from Lemma~\ref{lem:tree-lincol}. This concludes the proof of Theorem~\ref{thm:lincol}.

\section{An example of a tree with treedepth quadratic in the height of
  the binary tree or logarithm of a length of a path}\label{sec:ex}
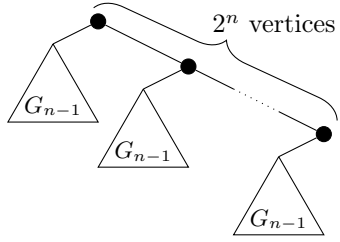
\begin{figure}[tb]
\begin{center}
\begin{tikzpicture}[scale=0.6]
   \tikzstyle{vertex}=[circle,draw=black,fill=black,minimum size=0.2cm,inner sep=0pt]

   \node[vertex] (x1) at (0, 0) {};
   \node[vertex] (x2) at (2, -1) {};
   \node[vertex] (x3) at (5, -2.5) {};
   \draw (x1) -- (x2) -- (3, -1.5);
   \draw[dotted] (3, -1.5) -- (4, -2);
   \draw (4, -2) -- (x3);

   \draw[decorate,decoration={brace,amplitude=10pt}] ($ (x1) + (-0.1, 0.3) $) -- ($ (x3) + (0.3, 0.3) $);
   \draw ($ (2.5, -1.25) + (1.4, 1.2) $) node {$2^n$ vertices};

   \foreach \n in {1, 2, 3} {
     \coordinate (r) at ($ (x\n) + (-1, -0.5) $);
     \draw (r) -- ($ (r) + (-1, -1.732) $) -- ($ (r) + (1, -1.732) $) -- (r);
     \draw ($ (r) + (0, -1.4) $) node {\small $G_{n-1}$};
     \draw (r) -- (x\n);
   }

\end{tikzpicture}
\caption{Construction of $G_n$.}\label{fig:Gn}
\end{center}
\end{figure}
In this section we provide a construction of a family of trees $(G_n)_{n \geq 1}$ such that
\begin{enumerate}
\item The tree $G_n$ does not contain a path with $2^{n+2}$ vertices.\label{p:ex:no-path}
\item The tree $G_n$ does not contain a subdivision of a full binary tree of depth $n+2$.\label{p:ex:no-tree}
\item The treedepth of $G_n$ is at least $\binom{n+1}{2}$.\label{p:ex:td}
\end{enumerate}

We will consider each tree $G_n$ as a rooted tree. The tree $G_1$ consists of a single vertex. 
For $n \geq 2$, the tree $G_n$ is defined recursively as follows.
We take a path $P_n$ with $2^n$ vertices and for each $v \in V(P_n)$ we create a copy $C_n^v$ of $G_{n-1}$ and attach
its root to $v$. We root $G_n$ in one of the endpoints of $P_n$; see Figure~\ref{fig:Gn}.
We now proceed with the proof of the properties of $G_n$.

Since every path in $G_n$ is contained in at most two root-to-leaf paths (not necessarily edge-disjoint), to show Property~\eqref{p:ex:no-path}
it suffices to show the following.
\begin{lemma}\label{lem:ex:no-path}
Every root-to-leaf path in $G_n$ contains less than $2^{n+1}$ vertices.
\end{lemma}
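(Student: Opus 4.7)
I would prove the lemma by straightforward induction on $n$, exploiting the recursive structure of $G_n$.

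The base case $n = 1$ is immediate: $G_1$ is a single vertex, so every root-to-leaf path contains exactly one vertex, and $1 < 4 = 2^{1+1}$.

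For the inductive step, fix $n \geq 2$ and assume the statement for $n-1$. Any root-to-leaf path $Q$ in $G_n$ begins at the chosen endpoint $r$ of $P_n$ (the root of $G_n$). Since every vertex of $P_n$ carries an attached copy of $G_{n-1}$ rooted at it, the only leaves of $G_n$ lie inside those copies. Thus $Q$ must traverse some initial segment of $P_n$ from $r$ to some vertex $v \in V(P_n)$, then enter the copy $C_n^v$ at its root and continue along a root-to-leaf path $Q'$ of $C_n^v$. The number of vertices of $Q$ on $P_n$ is at most $|V(P_n)| = 2^n$, and by the inductive hypothesis applied to $C_n^v \cong G_{n-1}$, the path $Q'$ has fewer than $2^n$ vertices. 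The root of $C_n^v$ coincides with $v$, so we must avoid double-counting it; nevertheless the total count is at most $2^n + (2^n - 1) - 1 < 2^{n+1}$, which is well within the required bound.

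The argument has no real obstacle; the only thing to be careful about is the off-by-one in merging the $P_n$-segment with the path inside $C_n^v$ at their shared vertex $v$, and in checking the strict inequality. Either one absorbs $v$ into the $P_n$-segment and counts only $|V(Q')| - 1$ additional vertices from the copy, or vice versa; in both accountings the bound $|V(Q)| < 2^{n+1}$ follows immediately from $|V(P_n)| = 2^n$ and $|V(Q')| < 2^n$.
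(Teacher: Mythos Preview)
Your proof is correct and follows the same inductive approach as the paper. One minor slip: in the construction of $G_n$ the root of $C_n^v$ is \emph{attached} to $v$ by an edge rather than identified with it, so there is no shared vertex to worry about; but your bound $|V(Q)| \le 2^n + (2^n - 1) < 2^{n+1}$ holds either way.
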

\begin{proof}
We prove the statement by induction on $n$.
For $n =1$ the statement is straightforward.
For the inductive step, observe that every root-to-leaf path in $G_n$ consists of a subpath of $P_n$
(which has $2^n$ vertices) and a root-to-leaf path in one of the copies $C_n^v$ of $G_{n-1}$ (which has
    less than $2^n$ vertices by the inductive assumption).
\end{proof}

We say that a subtree $H$ of $G_n$ that is a subdivision of a full binary tree of height $h \geq 1$ is \emph{aligned}
if $h=1$ or $h \geq 2$ and the closest to the root vertex of $H$ is of degree $2$ in $H$ and its deletion 
breaks $H$ into two subtrees containing a subdivision of a full binary tree of height $h-1$.
In other words, an aligned subtree has the same ancestor-descendant relation as the tree $G_n$.
Observe that any subtree $H_0$ of $G_n$ that is a subdivision of a full binary tree of height $h \geq 2$
contains a subtree that is an aligned subdivision of a full binary tree of height $h-1$.
Therefore, to prove Property~\eqref{p:ex:no-tree}, it suffices to show the following.
\begin{lemma}\label{lem:ex:no-tree}
$G_n$ does not contain an aligned subdivision of a full binary tree of height $n+1$.
\end{lemma}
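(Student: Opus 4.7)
The plan is induction on $n$, leveraging the recursive construction $G_n = P_n \cup \bigcup_{w \in V(P_n)} C_n^w$ with $C_n^w \cong G_{n-1}$. The base case $n=1$ is immediate, since $G_1$ is a single vertex while an aligned subdivision of height $2$ requires a degree-$2$ branching vertex. For the inductive step, I would assume that $G_{n-1}$ contains no aligned subdivision of height $n$, suppose for contradiction that $H \subseteq G_n$ is aligned of height $n+1$, and let $v$ be the closest-to-root vertex of $H$ in $G_n$. From the aligned condition, $v$ has degree $2$ in $H$, and $H - v$ splits into two subtrees each containing an aligned subdivision of height $n$ rooted at its own closest-to-root vertex.

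The remainder proceeds by a case split on the location of $v$. If $v$ lies in some pendant copy $C_n^w$ (possibly as its root), then every descendant of $v$ in $G_n$ lies in $C_n^w$, and hence $H \subseteq C_n^w$. Since the ancestor-descendant order of $G_n$ restricted to $V(C_n^w)$ is exactly the ancestor-descendant order of $G_{n-1}$ via the isomorphism $C_n^w \cong G_{n-1}$, $H$ is in fact an aligned subdivision of height $n+1 > n$ inside $G_{n-1}$, contradicting the inductive hypothesis.

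If instead $v$ lies on the spine $P_n$, the two edges of $H$ incident to $v$ must go to two distinct children of $v$ in $G_n$: they lead to two distinct $G_n$-neighbors of $v$, and each such neighbor is a descendant of $v$ because $v$ is the closest-to-root vertex of $H$. But $v$ has at most two children in $G_n$, namely the successor of $v$ on $P_n$ (if any) and the root $r_v$ of $C_n^v$, so one of the two subtrees of $H - v$ is entirely contained in the subtree of $G_n$ rooted at $r_v$, which is just $C_n^v \cong G_{n-1}$. That subtree then contains an aligned subdivision of height $n$ inside $G_{n-1}$, again contradicting the inductive hypothesis and closing the induction.

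The main obstacle I anticipate is the bookkeeping around the aligned condition when restricting from $G_n$ to a pendant copy. One has to verify that the closest-to-root designation and the recursive degree-$2$ branching structure both transfer cleanly, which in the end reduces to the single observation that the ancestor-descendant order of $G_{n-1}$ agrees with that of $G_n$ on $V(C_n^w)$. Once that preservation fact is in hand, the rest is a routine case split on where $v$ sits inside $G_n$, with the alternative (Case~B) forced to expose a copy of $G_{n-1}$ hanging off a spine vertex and thereby trigger the inductive hypothesis.
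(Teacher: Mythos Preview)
Your proposal is correct and follows essentially the same argument as the paper: induction on $n$, locate the closest-to-root vertex of the putative aligned subdivision, and split into the case where it lies in a pendant copy $C_n^w$ (so the whole subdivision sits inside a copy of $G_{n-1}$) versus the case where it lies on the spine $P_n$ (so one of the two branches is forced into $C_n^v \cong G_{n-1}$). Your write-up is somewhat more explicit than the paper's about why alignment is preserved under restriction to a pendant copy and why a spine vertex has at most two children, but the structure and key ideas are identical.
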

\begin{proof}
We prove the claim by induction on $n$. It is straightforward for $n=1$.
For $n \geq 2$, let $H$ be such an aligned subtree of $G_n$ and let $w$ be the closest to the root of $G_n$ vertex
of $H$. If $w \in V(C_n^v)$ for some $v \in V(P_n)$, then $H$ is completely contained in $C_n^v$, which is a copy
of $G_{n-1}$. Otherwise, $w \in V(P_n)$ and thus one of the components of $H-\{w\}$ lies in
$C_n^w$. However, this component contains an aligned subdivision of a full binary tree of height $n$.
In both cases, we obtain a contradiction with the inductive assumption.
\end{proof}

We are left with the treedepth lower bound of Property~\eqref{p:ex:td}.
To this end, we consider the following families of trees. For integers $a,b \geq 1$,
the family $\mathcal{G}_{a,b}$ contains all trees $H$ that are constructed from a path 
$P_H$ with at least $2^a$ vertices by attaching, for every $v \in V(P_H)$, a tree $T_v$ of treedepth at least $b$
by an edge to $v$.
We show the following.
\begin{lemma}\label{lem:ex:td}
For every $H \in \mathcal{G}_{a,b}$ we have $\td(H) \geq a + b$.
\end{lemma}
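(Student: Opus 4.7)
The plan is to prove the lemma by induction on $a$, using the recursive characterization of treedepth for connected graphs: $\td(G) = 1 + \min_{v \in V(G)} \td(G - v)$. Since every $H \in \mathcal{G}_{a,b}$ is a tree and hence connected, it suffices to show that $\td(H - v^\dagger) \ge a + b - 1$ for every choice of $v^\dagger \in V(H)$, which will then yield $\td(H) \ge a+b$.

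For the base case $a = 1$, the path $P_H$ has at least two vertices, and hence at least two of the attached trees $T_v$ have treedepth $\ge b$. Removing any single vertex $v^\dagger$ destroys at most one of them, leaving another intact as a subgraph of $H - v^\dagger$; this gives $\td(H - v^\dagger) \ge b$ and therefore $\td(H) \ge 1+b = a+b$.

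For the inductive step ($a \ge 2$), I fix $v^\dagger \in V(H)$ and split on its location. If $v^\dagger \in V(P_H)$, then $P_H - v^\dagger$ consists of two sub-paths whose vertex counts sum to $|P_H|-1 \ge 2^a - 1$, so the longer has at least $2^{a-1}$ vertices; together with the undamaged trees $T_v$ attached to its vertices it forms a subgraph of $H - v^\dagger$ belonging to $\mathcal{G}_{a-1,b}$, so by the inductive hypothesis $\td(H - v^\dagger) \ge (a-1)+b$. If instead $v^\dagger \in V(T_{v_0})$ for some $v_0 \in V(P_H)$, then $P_H$ remains intact in $H - v^\dagger$, and I take the longer of the two sub-paths of $P_H$ flanking $v_0$ (excluding $v_0$ itself); it still contains at least $2^{a-1}$ vertices, and each $T_v$ attached to it (for $v \ne v_0$) is intact, producing a subgraph of $H - v^\dagger$ in $\mathcal{G}_{a-1,b}$ and giving $\td(H - v^\dagger) \ge (a-1)+b$ by induction.

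The main subtlety lies in the second case of the inductive step: I must carefully select a sub-path of $P_H$ avoiding the ``damaged'' vertex $v_0 \in V(P_H)$ whose attached tree $T_{v_0}$ is potentially incomplete in $H - v^\dagger$, and the hypothesis $|P_H| \ge 2^a$ is precisely what guarantees one flanking side of $v_0$ retains enough vertices (at least $2^{a-1}$, all with intact attached trees) to apply the inductive hypothesis.
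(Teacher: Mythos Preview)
Your proof is correct and follows essentially the same strategy as the paper: induction on $a$, using the recursive characterization $\td(H) = 1 + \min_{v} \td(H-v)$ and showing that $H-v$ always contains a subtree in $\mathcal{G}_{a-1,b}$. The paper's proof is terser and does not spell out the case split on whether the removed vertex lies on $P_H$ or in some $T_{v_0}$, but the underlying argument is identical.
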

\begin{proof}
We prove the lemma by induction on $a$. 
For $a=1$ we have $\td(H) \geq a + 1$ and $H$ contains two vertex-disjoint subtrees of treedepth at least $b$ each.
Assume then $a > 1$ and $H \in \mathcal{G}_{a,b}$. 
Then for every $v \in V(H)$, $H-v$ contains a connected component that contains a subtree belonging to $\mathcal{G}_{a-1,b}$.
This finishes the proof.
\end{proof}
We show Property~\eqref{p:ex:td} by induction on $n$. Clearly, $\td(G_1) = 1 = \binom{1+1}{2}$.
Consider $n \geq 2$. Since the treedepth of $G_{n-1}$ is at least $\binom{n}{2}$, we have
that $G_n \in \mathcal{G}_{n, \binom{n}{2}}$. By Lemma~\ref{lem:ex:td},
we have that
\[ \td(G_n) \geq n + \binom{n}{2} = \binom{n+1}{2}. \]
This finishes the proof of Property~\eqref{p:ex:td}.

\section{Conclusions}
We have provided improved bounds for the excluded minor approximation of treedepth of Kawarabayashi and Rossman~\cite{KR18}.
Our main result, Theorem~\ref{thm:main}, is close to being optimal in the following sense:
as witnessed by the family of trees, if one considers the measure $r := \td(G) / \tw(G)$, 
 one cannot hope to find a tree in $G$ of treedepth larger than $r$. 
Improving the $Ck^3$ bound of Corollary~\ref{cor:main} to $Ck^{3-\varepsilon}$ for some $\varepsilon > 0$ seems challenging.

Our techniques can be applied to a polynomial-time treedepth approximation algorithm, improving upon
state-of-the-art tradeoff trick. As a second open problem, we ask for
a polynomial-time or single-exponential in treedepth
parameterized algorithm for constant or polylogarithmic approximation of treedepth.

\bibliographystyle{abbrv}

\bibliography{references}

\begin{thebibliography}{10}

\bibitem{ChuzhoyT19}
J.~Chuzhoy and Z.~Tan.
\newblock Towards tight(er) bounds for the excluded grid theorem.
\newblock In T.~M. Chan, editor, {\em Proceedings of the Thirtieth Annual
  {ACM-SIAM} Symposium on Discrete Algorithms, {SODA} 2019, San Diego,
  California, USA, January 6-9, 2019}, pages 1445--1464. {SIAM}, 2019.

\bibitem{esa-version}
W.~Czerwinski, W.~Nadara, and M.~Pilipczuk.
\newblock Improved bounds for the excluded-minor approximation of treedepth.
\newblock In M.~A. Bender, O.~Svensson, and G.~Herman, editors, {\em 27th
  Annual European Symposium on Algorithms, {ESA} 2019, September 9-11, 2019,
  Munich/Garching, Germany.}, volume 144 of {\em LIPIcs}, pages 34:1--34:13.
  Schloss Dagstuhl - Leibniz-Zentrum f{\"{u}}r Informatik, 2019.

\bibitem{diestel}
R.~Diestel.
\newblock {\em Graph Theory, 2nd Edition}.
\newblock Graduate texts in mathematics. Springer, 2000.

\bibitem{DvorakGT12}
Z.~Dvorak, A.~C. Giannopoulou, and D.~M. Thilikos.
\newblock Forbidden graphs for tree-depth.
\newblock {\em Eur. J. Comb.}, 33(5):969--979, 2012.

\bibitem{FeigeHL08}
U.~Feige, M.~Hajiaghayi, and J.~R. Lee.
\newblock Improved approximation algorithms for minimum weight vertex
  separators.
\newblock {\em {SIAM} J. Comput.}, 38(2):629--657, 2008.

\bibitem{KR18}
K.~Kawarabayashi and B.~Rossman.
\newblock A polynomial excluded-minor approximation of treedepth.
\newblock In A.~Czumaj, editor, {\em Proceedings of the Twenty-Ninth Annual
  {ACM-SIAM} Symposium on Discrete Algorithms, {SODA} 2018, New Orleans, LA,
  USA, January 7-10, 2018}, pages 234--246. {SIAM}, 2018.

\bibitem{LinCol}
J.~Kun, M.~P. O'Brien, and B.~D. Sullivan.
\newblock Treedepth bounds in linear colorings.
\newblock {\em CoRR}, abs/1802.09665, 2018.

\bibitem{NesetrilM06}
J.~Nesetril and P.~O. de~Mendez.
\newblock Tree-depth, subgraph coloring and homomorphism bounds.
\newblock {\em Eur. J. Comb.}, 27(6):1022--1041, 2006.

\bibitem{sparsity-book}
J.~Nesetril and P.~O. de~Mendez.
\newblock {\em Sparsity - Graphs, Structures, and Algorithms}, volume~28 of
  {\em Algorithms and combinatorics}.
\newblock Springer, 2012.

\bibitem{NesetrilM15}
J.~Nesetril and P.~O. de~Mendez.
\newblock On low tree-depth decompositions.
\newblock {\em Graphs and Combinatorics}, 31(6):1941--1963, 2015.

\bibitem{ReidlRVS14}
F.~Reidl, P.~Rossmanith, F.~S. Villaamil, and S.~Sikdar.
\newblock A faster parameterized algorithm for treedepth.
\newblock In J.~Esparza, P.~Fraigniaud, T.~Husfeldt, and E.~Koutsoupias,
  editors, {\em Automata, Languages, and Programming - 41st International
  Colloquium, {ICALP} 2014, Copenhagen, Denmark, July 8-11, 2014, Proceedings,
  Part {I}}, volume 8572 of {\em Lecture Notes in Computer Science}, pages
  931--942. Springer, 2014.

\bibitem{RobertsonS86}
N.~Robertson and P.~D. Seymour.
\newblock Graph minors. {II}. {A}lgorithmic aspects of tree-width.
\newblock {\em Journal of Algorithms}, 7(3):309--322, 1986.

\bibitem{Schaffer89}
A.~A. Sch{\"{a}}ffer.
\newblock Optimal node ranking of trees in linear time.
\newblock {\em Inf. Process. Lett.}, 33(2):91--96, 1989.

\end{thebibliography}

\end{document}